\title{Concentration Bounds for the Collision Estimator} 
\titlerunning{Collision Estimator} 
\author{Maciej Skorski}{University of Luxembourg}{maciej.skorski@gmail.com}{}{}
\authorrunning{M. Skorski} 
\keywords{Entropy Estimation, Collision Estimation, Birthday Paradox} 
\pgfplotsset{compat=1.13}
\crefname{lstlisting}{listing}{listings}
\Crefname{lstlisting}{Listing}{Listings}
\begin{document}

\maketitle

\begin{abstract}
We prove a strong concentration result about the natural collision estimator, which counts the number of collisions that occur within an iid sample.
This estimator is at the heart of algorithms used for uniformity testing and entropy assessment.

While the prior works were limited to only variance, we use elegant techniques of independent interest to bounds higher moments and conclude concentration properties.
As an immediate corollary we show that the estimator achieves high-probability guarantee on its own and there is no need for boosting (aka median/majority trick).
\end{abstract}

\section{Introduction}

\subsection{Collision Estimation}

For many applications, such as key derivation in cryptography~\cite{dodis2013overcoming}, property testing~\cite{goldreich2017introduction} and general algorithms~\cite{acharya2014complexity}
 it is of interest to estimate the \emph{collision probability} of a distribution $X$ 
\begin{align}\label{eq:cp}
Q \triangleq \sum_x \Pr[X=x]^2
\end{align}
Given a sample $X_1,\ldots,X_n\sim^{iid} X$ one defines the "natural" collision estimator as
\begin{align}\label{eq:cp_sample_estimator}
\tilde{Q} \triangleq \frac{1}{n(n-1)}\sum_{i\not=j}\mathbb{I}(X_i=X_j).
\end{align}
which resembles the birthday paradox. In this work we obtain a strong result about its concentration properties, which can be formally stated as follows.

\subsection{Our Contribution}

\begin{theorem}[Tails of Collision Estimator]\label{thm:main}
The estimator \eqref{eq:cp_sample_estimator}, after centering, has tails
\begin{align}
\Pr[|\tilde{Q}-Q|>\epsilon] \leqslant O(1)\exp( -\Omega( \min( \epsilon^2 /  v^2, \epsilon/b , n\sqrt{\epsilon }) ) ).
\end{align}
where we define
\begin{align}
v^2 & \triangleq \sum_x\Pr[X=x]^2/n^2 + \sum_x\Pr[X=x]^3/n  \\
b & \triangleq \max_x \Pr[X=x] / n
\end{align}
\end{theorem}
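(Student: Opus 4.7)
The estimator $\tilde Q$ is a U-statistic of order two with kernel $h(x,y)=\mathbb{I}(x=y)$, so the natural starting point is Hoeffding's decomposition. Setting $p_x=\Pr[X=x]$, write
\[
\tilde Q - Q \;=\; \underbrace{\tfrac{2}{n}\sum_{i=1}^n\bigl(p_{X_i}-Q\bigr)}_{L} \;+\; \underbrace{\tfrac{1}{n(n-1)}\sum_{i\neq j} h_2(X_i,X_j)}_{N},
\]
where $h_2(x,y)=\mathbb{I}(x=y)-p_x-p_y+Q$ is degenerate, i.e.\ $\mathbb{E}[h_2(x,X_2)]=0$ for every $x$. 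Since both parts are orthogonal, it is enough to bound the tails of $L$ and $N$ separately and then apply a union bound.

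\textbf{Linear part.} The summands of $L$ are i.i.d., bounded pointwise by $2p_{\max}$, and have variance $\sum_x p_x^3 - Q^2 \leq \sum_x p_x^3$. A direct application of Bernstein's inequality yields
\[
\Pr[|L|>\epsilon]\;\leq\;2\exp\!\left(-\Omega\!\left(\min\!\left(\frac{\epsilon^2}{\sum_x p_x^3/n},\;\frac{\epsilon}{p_{\max}/n}\right)\right)\right),
\]
which already delivers the $\epsilon^2/v^2$ contribution of $\sum_x p_x^3/n$ and the full $\epsilon/b$ term of the theorem.

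\textbf{Degenerate part — the core of the argument.} This is where the ``elegant techniques'' come in: I would bound all integer moments $\|N\|_p$ via a concentration/moment inequality for degenerate U-statistics of order two in the spirit of Gin\'e--Lata\l{}a--Zinn or Adamczak. Such a bound controls $\|N\|_p$ by four kernel-norm terms:
\[
\|N\|_p \;\lesssim\; \tfrac{1}{n^2}\bigl(n\sqrt{p}\,A\,+\,p\,B\,+\,p^{3/2}\,C\,+\,p^{2}\,D\bigr),
\]
where $A=(\mathbb{E}[h_2^2])^{1/2}$ controls the Gaussian regime, $B$ is an operator-type norm of the $(h_2(x,y))$ matrix weighted by the marginals, $C$ relates to a mixed $L^2$--$L^\infty$ norm, and $D=\|h_2\|_\infty=O(1)$. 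For our specific kernel one checks $A^2=\mathbb{E}[h_2^2]=O(Q)=O(\sum_x p_x^2)$, so the first term produces the remaining $\sum_x p_x^2/n^2$ contribution to $v^2$. The last term, with $D=O(1)$, yields after Markov and optimisation over $p$ precisely the exotic $n\sqrt{\epsilon}$ regime: $(n^2\epsilon/D)^{1/2}=\Omega(n\sqrt{\epsilon})$. The two intermediate terms produce $\epsilon^{2/3}$ and $\epsilon$ regimes that are routinely verified to be dominated by either $\epsilon/b$ from the linear part or by $n\sqrt{\epsilon}$.

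\textbf{Main obstacle.} The hard step is the moment inequality for the degenerate U-statistic, and in particular identifying the right norms of $h_2$ so that the four tail regimes align with the stated $v^2,\,b,\,n\sqrt{\epsilon}$ quantities. A naive Rosenthal-type bound produces suboptimal factors of $n$; the ``elegant'' point is that the degenerate structure $\mathbb{E}[h_2(x,X_2)]=0$ allows a cleaner hypercontractive-style estimate, after which optimising $p\mapsto (\|N\|_p/\epsilon)^p$ via standard Markov produces the triple-regime exponent. Once the moment bound is in place, combining with the Bernstein bound on $L$ through a union bound finishes the proof.
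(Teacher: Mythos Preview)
Your Hoeffding decomposition is in fact \emph{the same} splitting the paper uses: summing the paper's per-bin identity $\tilde Q_x-\mathbf E\tilde Q_x=U_2^{(x)}+2(n-1)p_xU_1^{(x)}$ over $x$ gives exactly your $L$ (from $\sum_x p_xU_1^{(x)}=\sum_i(p_{X_i}-Q)$) and your $N$ (from $\sum_x U_2^{(x)}=\sum_{i\neq j}h_2(X_i,X_j)$). Your Bernstein step for $L$ is essentially how the paper obtains the $\sum_x p_x^3/n$ part of $v^2$ and the scale $b=p_{\max}/n$. The genuine divergence is in the degenerate part $N$: the paper does \emph{not} invoke a GLZ/Adamczak black box, but instead decomposes $N$ back into the bin contributions $U_2^{(x)}$, shows they are negatively associated, bounds each $\mathbf E|U_2^{(x)}|^d$ by symmetrized binomial moments via decoupling, and then assembles them with a simplified form of Lata\l a's moment characterisation, splitting bins into $np_x\ge d$ (sub-gamma aggregation) and $np_x<d$ (heavy-tailed, yielding $n\sqrt\epsilon$).

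Your route is viable and shorter once one accepts the Gin\'e--Lata\l a--Zinn inequality, but the step you flag as ``routinely verified'' is exactly where the work is, and you should carry it out rather than assert it. For the kernel $h_2$ one computes $\|h_2\|_{L^2(P\otimes P)}^2=O(Q)$, $\|h_2\|_{L^2\to L^2}=O(p_{\max})$ (the integral operator is a diagonal multiplier by $p_x$ plus a rank-two perturbation), $\sup_x\|h_2(x,\cdot)\|_{L^2}=O(\sqrt{p_{\max}})$, and $\|h_2\|_\infty=O(1)$. Plugging these into the four GLZ terms and optimising Markov yields regimes $n^2\epsilon^2/Q$, $n\epsilon/p_{\max}$, $n\epsilon^{2/3}/p_{\max}^{1/3}$, and $n\sqrt\epsilon$ respectively; the third is the geometric interpolant of the second and fourth and is therefore always $\ge\min(\epsilon/b,\,n\sqrt\epsilon)$, so nothing is lost. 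Note also that the paper's warning about black-box bounds concerns Hanson--Wright (where the scale degenerates to $O(1/n)$), not GLZ for canonical U-statistics; the operator-norm term in GLZ recovers $\epsilon/b$ exactly. What your approach buys is brevity and a direct link to the U-statistics literature; what the paper's approach buys is a self-contained argument from first principles (no GLZ), plus techniques (negative dependence of bin loads, sharp binomial moment growth, the simplified Lata\l a criterion) that the authors advertise as reusable for higher-order collision estimators.
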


\begin{remark}[Intuition: variance and scale]
Best way to understand the concentration bounds in \Cref{thm:main} is to think of $v^2$ as a variance proxy (in fact we have $\mathbf{Var}[\tilde{Q}] = O(v^2)$)
 and of $b$ as as scale parameter. Then the tail of $\exp( -\Omega(\min(\epsilon^2/v^2,\epsilon/b ))$ is typical for so-called sub-gamma distributions~\cite{boucheron2013concentration}. The term with $\epsilon^{1/2}$ appears due to a possible \emph{heavy tail behavior}: when the moments grow like $d^{2d}$ we get the tail of $\mathrm{e}^{-\Omega(\epsilon^{1/2})}$.
\end{remark}

\subsubsection{Related Work}
To the best knowledge of the author, there are no prior works on \emph{exact concentration of the collision estimator}.
The variance of collision estimator itself has been studied extensively in the context of \emph{uniformity testing}~\cite{batu2001testing,goldreich2011testing,paninski2008coincidence,diakonikolas2016collision,goldreich2017introduction},
 and Renyi entropy estimation~\cite{acharya2014complexity,acharya2016estimating,obremski_et_al:LIPIcs:2017:7569}, but we lack of understanding of higher moments
and concentration properties. The techniques used to handle the variance were merely manipulation of algebraic expressions with some combinatorics to carry out term cancellations, which is hard to scale to higher moments. 
It is also not possible to derive a concentration result by a black-box application of known concentration inequalities: the main problem is that 
the estimator $\tilde{Q}$ is a quadratic form of correlated inputs, where the inputs are possibly very rare events. Leaving aside the problem of correlation,
the right tool to attack the quadratic form would be the Hanson-Wright inequality; however examining the state-of-art variants~\cite{rudelson2013hanson,bellec2019concentration} we find
them insufficient in our context (for example, we get very weak scale term $b$). For these reason we resort to direct moment estimates; again there is no directly
applicable formula, but at the core of our proof is the sharp moment characterization due to Latala~\cite{latala1997estimation}.



\subsubsection{Outline of Techniques}

Our result in \Cref{thm:main} is of interest not only because of its strong quantitative guarantees (which we will see later when discussing applications) but also because of elegant techniques of independent interest, that are used in the proof. We elaborate on that below.

\paragraph{Negative Dependence}

Negatively dependency of random variables, roughly speaking, captures the property that one of them increases others are more likely to decrease. 
This property is a very strong form of negative correlation known to imply concentration bounds comparable to those of independent random variables; essentially (in the context of concentration bounds) one works with negatively dependent random variables as if they were independent, which simplifies an analysis to a great extent. Best-known from applications to balls and bin problems, the theory of negative dependence has been summarized in~\cite{joag1983negative} and~\cite{dubhashi1996balls}.

In this work we leverage the negative dependence by proving that this property holds for estimator's contributions (thought as loads) from possible outcomes of the distribution  (thought as bins);
more precisely negatively dependent are $\tilde{Q}_x\triangleq \sum_{i\not=j}\mathbb{I}(X_i = X_j = x)$ indexed by $x$.
This reduces the problem to studying sums of independent random variables distributed as $\tilde{Q}_x$.
As a remark we note that this trick can be also used to simplify a bulk of computations in the case of higher-order collisions, studied in higher-order Renyi entropy estimators~\cite{acharya2016estimating,acharya2014complexity,obremski_et_al:LIPIcs:2017:7569}.

\paragraph{Subtle Moment Methods}

Most of concentration results in TCS papers are obtained by a black-box application of Chernoff-like bounds, and it is not so common to face up a case where these inequalities fail to produce good results. As we point out in this work, collision estimation seems to be such a use case. The problem is that observing every fixed element $x$ in a sample is a \emph{rare event} with extremely small probability (for example, in the birthday paradox setup we have $\Pr[X=x] = 1/m$ whereas $n = O(\sqrt{m})$). 
Since we have $\tilde{Q}_x=\sum_{i\not=j}\mathbb{I}(X_i =x)\mathbb{I}( X_j = x) $ which are \emph{quadratic forms} of these rare events, we may want to apply
a variant of Hanson-Wright's Inequality~\cite{rudelson2013hanson,bellec2019concentration} and then to assemble obtained concentrations of $\tilde{Q}_x$ into a concentration result for $
\sum_{x}\tilde{Q}_x$ (e.g. by Cramer-Chernoff); unfortunately best known bounds for quadratic forms do not behave well if inputs are very small
(they contain distribution-free terms in exponent; in our case we would end up with much weaker $b = O(1/n)$). 


This motivates the broader question on what to do when exponential inequalities fail? Our solution is to resort to subtle moment methods that have been studied particularly by Latala~\cite{latala1997estimation}.
One important contributions of this paper is that we simplify one of his results, showing how to estimate moments (and hence concentration properties) of sums $\sum_x Z_x$  by \emph{controlling sum of moments}, e.g. by bounding $\sum_x \mathbf{E} |Z_x|^d$. This bound is very convenient as individual moments are much easier to compute; we call such conditions~\emph{Rosenthal-type} due to the celebrated result of Rosenthal~\cite{rosenthal1970subspaces} in same spirit.
We note that our technique can be used in other problems where applications of exponential concentration bounds are problematic, for example applied to the problem of missing mass~\cite{ortiz2003concentration}.

\paragraph{Reduction to Binomial Moments}

Armed with the Rosenthal-type concentration bounds we are left with estimating the moments of estimator contributions $\tilde{Q}_x$. 
Here we apply the tricks that have been proven useful when dealing with quadratic forms:
 centering, symmetrization and decoupling. Eventually we are able to link moments of $\tilde{Q}_x$ with those of (symmetrized) binomial distributions. More precisely 
for $p=\Pr[X=x], S,S'\sim^{iid}\mathsf{Binom}(n,p)$
we obtain the following bound
$$
\mathbf{E} |\tilde{Q}_x-\mathbf{E}\tilde{Q}_x|^d \leqslant O( \mathbf{E} |S-S'|^d )^2 + O(d n^2p^2)^{d/2} \mathbf{E} |S-S'|^d 
$$
This explains the specific form of \Cref{thm:main}: consider for simplicity $d=2$ then the first and second term on the right-hand side contribute respectively $p=\Pr[X=x]^2$ and $p=\Pr[X=x]^3$;
with higher $d$ they contribute respectively $p^{d/2}$ and $p^{3d/2}$.

The crucial step here is to use asymptotically sharp bound on binomial moments (so that we have optimal dependency on $d$). Since they are hard to find in the literature, we prove such bounds using an elementary combinatorial approach along with symmetrization.


\subsection{Applications}

\subsubsection{Application to Uniformity Testing}

In uniformity testing one wants to know how close is some unknown distribution to the uniform one, based on a random sample.
An appealing idea is to relate the closeness to the collision probability $Q$: for a distribution over $m$ elements the smallest value of $Q$ is $1/m$ which is realized by the uniform distribution $U_m$.
Then the closer is $Q$ to $1/m$, the closer is $X$ to $U_m$. Such a test was studied by a number of authors
\cite{batu2001testing,goldreich2011testing,paninski2008coincidence,diakonikolas2016collision,goldreich2017introduction} with optimal bounds found in~\cite{diakonikolas2016collision}. Remarkably, our concentration bounds imply that such a test achieves high-probability guarantees on its own, as stated in \Cref{thm:tester}.
Prior to our work the test guarantees were quite weak so it was necessary to run multiple tests in parallel.
\lstset{
language=Python,
basicstyle=\small,
stringstyle = \ttfamily,
keywordstyle=\color{black}\bfseries,
mathescape=true,
morecomment=[s]{/*}{*/}
}
\begin{lstlisting}[label={lst:tester},caption={Uniformity Tester for Discrete Distributions.},captionpos=[b]]
def l2closeness_to_uniform($X$,$n$,$\epsilon$):
  $m\gets |\mathrm{dom}(X)|$ /* domain size */
  $x[1]\ldots x[n]  \gets^{IID} X$ /* get iid samples */
  $Q \gets \#\{ (i,j) : x[i]=x[j], i\not=j\}  $ /* count collisions */
  $Q \gets {Q}/{n(n-1)}$ /* normalize */
  if $Q>(1+\epsilon)/m$:
    return False
  elif $Q < (1+\epsilon)/m$:
    return True
\end{lstlisting}

\begin{corollary}[Optimal Sublinear Collision Tester]\label{thm:tester}
If $X$ is distributed over $m$ elements, with 
$$
n = O(\log(1/\delta) m^{1/2} / \epsilon)
$$
samples the algorithm in \Cref{lst:tester} distinguishes with probability $\delta$ between a)
$\|\mathbf{P}_X-U_m\|_2^2\leqslant \frac{\epsilon}{2m} $ and b) $\|\mathbf{P}_X-U_m\|_2^2\geqslant \frac{2\epsilon}{m} $, when $1/\sqrt{m}\leqslant \epsilon \leqslant 1$.
\end{corollary}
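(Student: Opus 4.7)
The plan is to reduce correctness of the tester to a concentration statement for $\tilde{Q}$ around $Q$ and then apply \Cref{thm:main}. The identity $\|\mathbf{P}_X-U_m\|_2^2=Q-1/m$ rewrites hypothesis (a) as $Q\leqslant(1+\epsilon/2)/m$ and hypothesis (b) as $Q\geqslant(1+2\epsilon)/m$, so the tester's threshold $(1+\epsilon)/m$ strictly separates them. Correctness therefore follows from the event $|\tilde{Q}-Q|<\epsilon/(2m)$, and it suffices to show that this event has probability at least $1-\delta$.

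I would invoke \Cref{thm:main} with $\epsilon':=\epsilon/(2m)$ and check that each of the three exponents $\epsilon'^{2}/v^{2}$, $\epsilon'/b$, $n\sqrt{\epsilon'}$ is $\Omega(\log(1/\delta))$ under the claimed $n=C\log(1/\delta)\sqrt{m}/\epsilon$ for a sufficiently large constant $C$. To bound the parameters I would exploit that in both hypotheses $\mathbf{P}_X$ is forced close to uniform: writing $\mathbf{P}_X(x)=1/m+\delta_x$ and $\eta:=\sum_x\delta_x^{2}$, one has $Q=1/m+\eta$, $\max_x|\delta_x|\leqslant\sqrt{\eta}$, and expanding the cube
\[
  \sum_x\mathbf{P}_X(x)^{3} \;=\; \frac{1}{m^{2}}+\frac{3\eta}{m}+\sum_x\delta_x^{3} \;\leqslant\; \frac{1}{m^{2}}+\frac{3\eta}{m}+\eta^{3/2}.
\]
Case (a) yields $\eta\leqslant\epsilon/(2m)$ directly; in case (b) the binding sub-case is the boundary $\eta=\Theta(\epsilon/m)$, because larger $\eta$ only widens the gap $Q-(1+\epsilon)/m$ and can be absorbed by a plain Chebyshev argument on the side.

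With $\eta=O(\epsilon/m)$ and $\epsilon\geqslant 1/\sqrt{m}$ (so that $\sqrt{\eta}$ dominates $1/m$), this gives $v^{2}=O\!\left(1/(mn^{2})+1/(m^{2}n)+\epsilon^{3/2}/(m^{3/2}n)\right)$ and $b=O(\sqrt{\epsilon/m}/n)$. Plugging in $n$ reduces everything to routine algebra: the variance-like exponent $\epsilon'^{2}/v^{2}$ comes out $\Omega(\min(\log^{2}(1/\delta),\,\epsilon\sqrt{m}\log(1/\delta),\,\log(1/\delta)/\sqrt{\epsilon}))$, the scale exponent $\epsilon'/b$ comes out $\Omega(\log(1/\delta)/\sqrt{\epsilon})$, and the heavy-tail exponent $n\sqrt{\epsilon'}$ comes out $\Omega(\log(1/\delta)/\sqrt{\epsilon})$. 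Since $\epsilon\leqslant 1$ and $\epsilon\sqrt{m}\geqslant 1$, all three are $\Omega(\log(1/\delta))$, so the failure probability is at most $\delta$.

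The step I expect to be the real obstacle is the sharpness of the third-moment estimate. The crude bound $\sum_x\mathbf{P}_X(x)^{3}\leqslant\max_x\mathbf{P}_X(x)\cdot Q\leqslant Q^{3/2}=O(m^{-3/2})$ would force the variance-like exponent to scale only as $\Omega(\epsilon\log(1/\delta))$, driving the sample complexity up to $\Omega(\log(1/\delta)\sqrt{m}/\epsilon^{2})$. The extra factor of $\epsilon^{3/2}$ that restores the optimal $\sqrt{m}/\epsilon$ scaling is exactly what the cubic expansion above, combined with the sharp $\ell_\infty$--$\ell_2$ control $\max_x|\delta_x|\leqslant\sqrt{\eta}$, supplies.
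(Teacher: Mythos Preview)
Your reduction and the cubic expansion for $\sum_x \mathbf{P}_X(x)^3$ match the paper's approach, and that expansion is indeed the crux for getting the optimal $\sqrt{m}/\epsilon$ dependence. However, there is a genuine gap in how you dispose of case (b) when $\eta$ is \emph{not} $O(\epsilon/m)$.

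You claim that larger $\eta$ ``only widens the gap'' and ``can be absorbed by a plain Chebyshev argument.'' Neither part holds. First, Chebyshev yields only polynomial tails and cannot deliver failure probability $\delta$ when $\delta$ is small; the whole point of the corollary is the $\log(1/\delta)$ dependence. Second, although the gap $Q-(1+\epsilon)/m$ grows with $\eta$, so do $v^2$ and $b$: one has $\max_x\mathbf{P}_X(x)\leqslant Q^{1/2}=\sqrt{(1+\alpha)/m}$ with $\alpha=m\eta$, hence $b=\Theta(\sqrt{(1+\alpha)/m}/n)$. With your fixed deviation $\epsilon'=\epsilon/(2m)$ the scale exponent becomes
\[
\frac{\epsilon'}{b}\;=\;\Theta\!\left(\frac{\epsilon n}{\sqrt{m(1+\alpha)}}\right)\;=\;\Theta\!\left(\frac{\log(1/\delta)}{\sqrt{1+\alpha}}\right),
\]
which is $o(\log(1/\delta))$ once $\alpha\to\infty$. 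So the exponent in \Cref{thm:main} is \emph{not} $\Omega(\log(1/\delta))$ uniformly in $\alpha$ under your scheme.

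The paper's remedy is exactly to let the allowed deviation scale with $\alpha$: it takes $\epsilon':=\max(\epsilon,\alpha)/(2m)$ rather than $\epsilon/(2m)$. In case~(b) this still certifies correctness because $Q-\max(\epsilon,\alpha)/(2m)\geqslant (1+\alpha/2)/m\geqslant(1+\epsilon)/m$, and now each exponent is bounded below uniformly: e.g.\ $\epsilon'/b\geqslant\Omega(nm^{-1/2})\cdot\max(\epsilon,\alpha)/\sqrt{1+\alpha}\geqslant\Omega(nm^{-1/2}\epsilon)$, using that $\alpha/\sqrt{1+\alpha}=\Omega(1)\geqslant\Omega(\epsilon)$ when $\alpha>1$. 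The same adaptive $\epsilon'$ is what makes the variance-type exponent $\max(\epsilon,\alpha)^2/(m^2v^2)$ come out $\Omega(\log(1/\delta))$ via your cubic expansion with $\alpha$ in place of $\epsilon$. Replacing your Chebyshev handwave by this adaptive deviation closes the gap; everything else in your outline is essentially the paper's proof.
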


\begin{remark}[Comparison with ~\cite{diakonikolas2016collision}]
The novelty is that we do not require parallel runs to get small error probability $\delta$. The sample size $n$ matches 
the best bound due to ~\cite{diakonikolas2016collision} under the mild restriction that $n=O(m)$ (which implies $\epsilon = \Omega(m^{-1/2})$ in \Cref{thm:tester})
that is the number of samples is at most linear in the alphabet size.
Such sublinear algorithms are of practical interest when the alphabet is huge; the restriction $\epsilon = \Omega(m^{-1/2})$ is also sufficient for virtually all cryptographic applications because
when $X\in \{0,1\}^d$ we have that $m^{-1/2} = 2^{-d/2}$ corresponds to exponential security guarantees).

\end{remark}


\subsubsection{Application to R\'{e}nyi Entropy Estimation}
Consider the problem of \emph{relative} estimation, where $\epsilon:= \epsilon Q$. This can be seen as estimation of collision entropy $\mathbf{H}_2(X)\triangleq -\log Q$
within an additive error of $\epsilon$~\cite{acharya2016estimating}. Our result implies again that the estimator achieves high-probability guarantee on its own, without
parallel runs. 
\begin{corollary}[Collision Estimation]\label{cor:collision_estimation}
We have 
\begin{align*}
\Pr[|\tilde{Q}-Q| > \epsilon Q] \leqslant O(1)\exp(-\Omega( n \epsilon^2 Q^{1/2})),\quad 0<\epsilon < 1.
\end{align*}
\end{corollary}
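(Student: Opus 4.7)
My plan is to obtain this corollary as a direct consequence of Theorem~\ref{thm:main} with the substitution $\epsilon \leftarrow \epsilon Q$, verifying that each of the three competing exponents in the minimum, namely the sub-Gaussian $\epsilon^2/v^2$, the sub-exponential $\epsilon/b$ and the heavy-tail $n\sqrt{\epsilon}$, is at least $\Omega(n\epsilon^2 Q^{1/2})$ after the substitution. The single ingredient that tames all three terms uniformly is the elementary inequality $\max_x \Pr[X=x] \leqslant Q^{1/2}$, which follows from $(\max_x p_x)^2 \leqslant \sum_x p_x^2 = Q$. It immediately yields $b \leqslant Q^{1/2}/n$ and, via $\sum_x p_x^3 \leqslant (\max_x p_x)\sum_x p_x^2 \leqslant Q^{3/2}$, the variance-proxy estimate $v^2 \leqslant Q/n^2 + Q^{3/2}/n$.

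Given these inputs I would check the three bounds in turn. For the sub-exponential term, $\epsilon Q/b \geqslant \epsilon n Q^{1/2} \geqslant \epsilon^2 n Q^{1/2}$ since $\epsilon\leqslant 1$. For the heavy-tail term, $n\sqrt{\epsilon Q} = n\epsilon^{1/2} Q^{1/2} \geqslant n\epsilon^2 Q^{1/2}$, again by $\epsilon\leqslant 1$. For the sub-Gaussian term a mild case split is convenient: when $nQ^{1/2}\geqslant 1$ the second summand of $v^2$ dominates, giving $(\epsilon Q)^2/v^2 \geqslant \Omega(n\epsilon^2 Q^{1/2})$ directly; in the opposite regime $nQ^{1/2}<1$ the target rate $n\epsilon^2 Q^{1/2}$ is itself $O(1)$, so the claimed tail bound is vacuous once absorbed into the $O(1)$ prefactor. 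Taking the minimum over the three cases then reproduces the advertised $\exp(-\Omega(n\epsilon^2 Q^{1/2}))$.

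The main difficulty is really just bookkeeping, namely tracking which of the three tail regimes dominates in which range of parameters; it is entirely dispatched by the single uniform estimate $\max_x p_x \leqslant Q^{1/2}$, after which no probabilistic content beyond Theorem~\ref{thm:main} is needed.
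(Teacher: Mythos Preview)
Your proposal is correct and follows essentially the same route as the paper: bound $b\leqslant Q^{1/2}/n$ and $v^2\leqslant Q/n^2+Q^{3/2}/n$ via $\max_x p_x\leqslant Q^{1/2}$, substitute $\epsilon\leftarrow\epsilon Q$ in Theorem~\ref{thm:main}, and verify each branch of the minimum dominates $n\epsilon^2 Q^{1/2}$. The only cosmetic difference is that the paper splits the sub-Gaussian term $(\epsilon Q)^2/v^2$ into two pieces $n^2Q\epsilon^2$ and $nQ^{1/2}\epsilon^2$ and then takes the overall minimum, whereas you do an explicit case split on $nQ^{1/2}\gtrless 1$ and dispose of the small regime as vacuous; both amount to the same observation.
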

Which shows that the estimator requires 
\begin{align*}
n =O( \log(1/\delta)Q^{-1/2} /\epsilon^2) 
\end{align*}
samples to achieve relative error of $\epsilon$ and probability guarantee of $1-\delta$.
\begin{remark}[Optimality]
In the worst case we have $n = O( m^{1/2}/\epsilon^2)$ when the domain of $X$ has $m$ elements, which matches the lower bounds~\cite{acharya2016estimating}.
\end{remark}

\begin{remark}[Difference from Uniformity Testing: Phase Transition]
The bounds for collision estimation are sharp when $Q = \Omega(1/m)$, however for uniformity testing one considers the different regime
of $Q = 1/m\cdot (1+o(1))$, so the lower bounds~\cite{acharya2016estimating} no longer apply.
Indeed, uniformity testing allows a better dependency on $\epsilon$ that suggested by the general collision estimation. This is an interesting phenomena that could be seen as a "phase transition".
\end{remark}

\subsection{Organization}

In \Cref{sec:apprs} we show in detail how to derive results on applications claimed above.
The proof of the main result is given in \Cref{sec:proof} and follows the presented outline.
In \Cref{sec:conclude} we conclude the work.







\section{Preliminaries}

\subsection{Sub-Gamma Distributions}
A random variable $Z$ is \emph{sub-gamma} with variance factor $v^2$ and scale $b$ when~\cite{boucheron2013concentration}
$$\mathbf{E}\exp(tZ) \leqslant \exp\left(\frac{v^2t^2}{2(1-bt)}\right),\text{ when } |t|<1/b$$
Such a distribution has gamma-like tails (by the Cramer-Chernoff method~\cite{cramer1938nouveau,chernoff1952measure}, see~\cite{boucheron2013concentration})
\begin{proposition}[Sub-Gamma Tails]
If $Z$ is sub-gamma with variance factor $v^2$ and scale $b$ 
$$
\Pr||Z|>t]\leqslant 2\exp\left(-\frac{t^2}{2(v^2+b\cdot t)}\right),\quad t>0.
$$
\end{proposition}
Sub-gamma property aggregates when taking sums of independent variables~\cite{subg_lecture_notes}
\begin{proposition}[Sub-Gamma Aggregation]
let $Z_i$ be sub-gamma with variance factor $v_i^2$ and scale $b_i$, then $\sum_i Z_i$
is sub-gamma with parameters $v^2 = \sum_i v_i^2$ and $b= \max_i b_i$.
\end{proposition}
The sub-gamma property can be verified by the moments (see~\cite{boucheron2013concentration}, Theorem 2.3)
\begin{proposition}[Sub-Gamma Property via Moments]\label{prop:subg_moments}
Let $Z$ be centered. If $Z$ is sub-gamma with variance factor $v^2$ and scale $b$ then $(\mathbf{E}|Z|^d)^{1/d} = O(d^{1/2}v + d b  )$ for every even $d\geqslant 2$. Conversely,
when $(\mathbf{E}|Z|^d)^{1/d} = O(d^{1/2}v + d b  )$ for every even $d\geqslant 2$ then $Z$ is sub-gamma with variance factor $O(v^2+b^2)$ and scale $O(b)$.
\end{proposition}

\subsection{Rosenthal-type Moment Bounds}

The following result bounds the moments of a sum of random variables by controlling moments of individual components.
\begin{lemma}[Sharp Bounds for Moments of Independent Sums~\cite{latala1997estimation}]
For $W_x$ independent
$$
(\mathbf{E} | \sum_{x}W_x |^d)^{1/d}  = \Theta(1)\cdot \inf\{  T: \sum_x \log \mathbf{E}|1+W_x/T|^d \leqslant d  \}
$$
holds for any real $d\geqslant 1$.
\end{lemma}

\begin{lemma}[Simplified Latala's Bound]\label{prop:latala_simple}
Let $W_x$ be independent and centered, let $d$ be even and define the function $\phi(u)\triangleq \frac{(1+u)^d+(1-u)^{d}}{2}-1$. Then
$$
(\mathbf{E} | \sum_{x}W_x |^d)^{1/d}  \leqslant O(1)\cdot \inf\{  T: \sum_x \mathbf{E}\phi(W_x/T) \leqslant d  \}
$$
\end{lemma}
\begin{proof}[Proof of \Cref{prop:latala_simple}]
By Jensen's inequality
$$
\sum_x \log \mathbf{E}|1+W_x/T|^d \leqslant  m\log( m^{-1}\sum_x \mathbf{E}|1+W_x/T|^d)
$$
By the symmetrization trick we can assume that $W_{x}$ are symmetric, loosing a factor $O(1)$ in the upper bound. Expanding the $d$-th power and computing moments we obtain
$$
\frac{1}{m}\sum_x \mathbf{E}|1+W_x/T|^d  = 1+\frac{1}{m}\sum_{x}\mathbf{E}\phi(W_x/T)
$$
and since $1+\frac{1}{m}\sum_{x}\mathbf{E}\phi(W_x/T) \leqslant \exp(\frac{1}{m}\sum_{x}\mathbf{E}\phi(W_x/T) )$ we finally obtain
$$
\sum_x \log \mathbf{E}|1+W_x/T|^d  \leqslant \sum_{x}\mathbf{E}\phi(W_x/T)
$$
which finishes the proof.
\end{proof}

\subsection{Growth of Binomial Moments}

\begin{lemma}[Symmetrized Binomial Moments]\label{lemma:symm_binom_moment}
Let $S\sim \mathsf{Binom}(n,p)$ and $S'$ be an independent copy of $S$. Then letting $\sigma^2=2p(1-p)$ we have for any even positive $d$
$$
\mathbf{E}(S-S')^d \leqslant O(d)^{d/2}\sum_{\ell=1}^{d/2}\binom{n}{\ell} \ell^{d/2} \sigma^{2\ell}.
$$
\end{lemma}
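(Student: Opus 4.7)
The plan is to exploit symmetrization at the variable level: write $S-S' = \sum_{i=1}^n Z_i$ with $Z_i \triangleq X_i - Y_i$, where $X_i, Y_i$ are independent Bernoulli$(p)$. Each $Z_i$ is symmetric and supported on $\{-1,0,1\}$, so $\mathbf{E}[Z_i^k] = 0$ for odd $k$ and $\mathbf{E}[Z_i^k] = \mathbf{E}[Z_i^2] = 2p(1-p) = \sigma^2$ for every even $k \geqslant 2$. This constancy of even moments is the crucial structural feature that makes the combinatorics tractable.

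Next, I would expand by the multinomial theorem,
\begin{align*}
\mathbf{E}(S-S')^d = \sum_{\substack{d_1+\cdots+d_n=d \\ d_i\geqslant 0}} \binom{d}{d_1,\ldots,d_n} \prod_{i=1}^n \mathbf{E}[Z_i^{d_i}],
\end{align*}
and keep only the terms where every nonzero $d_i$ is even (odd powers vanish by symmetry). Indexing by $\ell$, the number of indices with $d_i \geqslant 2$, and writing $d_i = 2e_i$ with $e_i \geqslant 1$ and $\sum e_i = d/2$, the expression collapses to
\begin{align*}
\mathbf{E}(S-S')^d = \sum_{\ell=1}^{d/2} \binom{n}{\ell}\, \sigma^{2\ell} \sum_{\substack{e_1+\cdots+e_\ell=d/2 \\ e_i\geqslant 1}} \frac{d!}{(2e_1)!\cdots(2e_\ell)!}.
\end{align*}

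The main obstacle is bounding the inner combinatorial sum by $O(d)^{d/2}\ell^{d/2}$. For this I would use the elementary inequality $(2e)! \geqslant 2^e (e!)^2$ (equivalently $\binom{2e}{e} \geqslant 2^e$), which yields
\begin{align*}
\frac{d!}{(2e_1)!\cdots(2e_\ell)!} \;\leqslant\; \frac{1}{2^{d/2}}\cdot \binom{d}{d/2}\cdot \binom{d/2}{e_1,\ldots,e_\ell}^{2},
\end{align*}
after the identity $\frac{d!}{\prod (e_i!)^2} = \binom{d}{d/2}\binom{d/2}{e_1,\ldots,e_\ell}^2$. Now $\binom{d}{d/2} \leqslant 2^{d}$ cancels the $2^{d/2}$ up to a clean $2^{d/2}$ factor, and the squared multinomials are controlled by
\begin{align*}
\sum_{e}\binom{d/2}{e_1,\ldots,e_\ell}^{2} \;\leqslant\; \max_{e}\binom{d/2}{e_1,\ldots,e_\ell} \cdot \sum_{e}\binom{d/2}{e_1,\ldots,e_\ell} \;\leqslant\; \ell^{d/2}\cdot \ell^{d/2},
\end{align*}
using the trivial bound $\binom{d/2}{e_1,\ldots,e_\ell} \leqslant \ell^{d/2}$ and the multinomial identity $\sum_e \binom{d/2}{e_1,\ldots,e_\ell} = \ell^{d/2}$.

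Assembling, the inner sum is at most $2^{d/2}\ell^{d}$, and the constraint $\ell \leqslant d/2$ lets me split $\ell^{d} = \ell^{d/2}\cdot \ell^{d/2} \leqslant (d/2)^{d/2}\ell^{d/2}$, producing the required $O(d)^{d/2}\ell^{d/2}$ factor. Substituting back gives the stated bound. I expect the delicate step to be choosing the symmetrization at the Bernoulli level (rather than on $S$ itself), since this is what kills the odd moments and renders all surviving even moments equal to $\sigma^2$; without this, the combinatorial sum would contain an unwieldy product of distinct moment factors.
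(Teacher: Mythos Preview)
Your proof is correct and follows essentially the same route as the paper: write $S-S'=\sum_i Z_i$ with symmetric $\{-1,0,1\}$-valued summands, expand multinomially, drop terms with an odd multiplicity, and group by the number $\ell$ of active indices. The only divergence is in bounding the inner combinatorial sum
\[
\sum_{\substack{c_1+\cdots+c_\ell=d\\ c_j\text{ even, positive}}}\binom{d}{c_1,\ldots,c_\ell}.
\]
The paper identifies this sum with (an upper bound by) the Rademacher moment $\mathbf{E}(r_1+\cdots+r_\ell)^d$ and invokes Khintchine's inequality to get $O(d\ell)^{d/2}$. You instead use the elementary bound $(2e)!\geqslant 2^e(e!)^2$ together with the multinomial identity and $\ell\leqslant d/2$ to reach the same $d^{d/2}\ell^{d/2}$. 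Your argument is fully self-contained and avoids the external appeal to Khintchine; the paper's version is shorter and more conceptual once one accepts Khintchine as a black box. Either way the bound and the constants are of the same order.
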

\begin{note}
When $n\sigma^2= \Theta(1)$ this can grow as $O(d)^d$.
\end{note}
\begin{proof}[Proof of \Cref{lemma:symm_binom_moment}]
Let $\eta,\eta'$ be independent distributed as $\mathsf{Bern}(p)$, we can write $S-S'= \sum_{i=1}^{n}\eta_i$ where $\eta_i\sim^{iid}\eta-\eta'$. Consider now the multinomial expansion for even $d$
$$
\mathbf{E}|S-S'|^d = \mathbf{E}(\sum_{i=1}^{d}\eta_i)^d = \sum_{i_1,\ldots,i_d}\mathbf{E}[\prod_{k=1}^{d}\eta_{i_k}]
$$
Say that the tuple $(i_1,\ldots,i_d)$ has $\ell$ distinct values which appear with multiplicities $c_1,\ldots,c_k>0$, $c_1+\ldots+c_k=d$. Then $\mathbf{E}[\prod_{k=1}^{d}\eta_{i_k}] = \prod_{k=1}^{\ell} \mathbf{E}(\eta-\eta')^{c_k}$. Since $\eta-\eta'$ is symmetric we can consider only  the case where all $c_k$ are even and since $|\eta-\eta'|\leqslant 1$
we have for each $k$
$$\mathbf{E}(\eta-\eta')^{c_k} \leqslant \mathbf{E}(\eta-\eta')^2 = \mathbf{Var}[\eta-\eta'] = 2p(1-p)$$
 The number of such tuples is $\binom{n}{\ell}\binom{d}{c_1,\ldots,c_k}$, and therefore
$$
\mathbf{E}(\sum_{i=1}^{d}\eta_i)^d  \leqslant \sum_{\ell=1}^{d/2}\binom{n}{\ell}\sum_{c_1,\ldots,c_{\ell}: \text{even,positive}}\binom{d}{c_1,\ldots,c_{\ell}} \sigma^{2\ell}
$$
We are left with the combinatorial problem of determining the sum of even multinomial coefficients.
It is known that this quantity equals the moment of a rademacher sum
$$
\sum_{c_1,\ldots,c_{\ell}: \text{even,positive}}\binom{d}{c_1,\ldots,c_{\ell}}  = \mathbf{E}\left(r_1+\ldots + r_{\ell}\right)^d,\quad r_i \sim^{iid} \pm 1\ w.p.\ \frac{1}{2}.
$$
By Khintchine's Inequality this is at most $O(d\ell)^{d/2}$ and the proof is finished.
\end{proof}

\section{Applications}\label{sec:apprs}

\begin{proof}[Proof of \Cref{thm:tester}]
Since $\|\mathbf{P}_X-U_m\|^2 =\| \mathbf{P}_X\|^2 - 1/m = Q - 1/m $, 
the two cases in \Cref{thm:tester} are equivalent to a) $Q  \leqslant \frac{1+\epsilon/2}{m}$ and b) $Q \geqslant \frac{1+2\epsilon}{m}$. 
Let $\alpha$ be such that $Q = (1+\alpha)/m$, then it suffices to prove that \Cref{lst:tester} estimates $Q$ estimates
within an additive error $\max(\epsilon,\alpha)/2m$ and correctness probability $1-\delta$. 
To this end we show that the exponent in \Cref{thm:main} with $\epsilon:=\max(\epsilon,\alpha)/2m$ for $n = \Theta(n\log(1/\delta)/\epsilon$ is $\Omega(\log(1/\delta))$. In fact we show
\begin{align*}
\min\left(\frac{\max(\epsilon,\alpha)^2}{ m^2  v^2}, \frac{ \max(\epsilon,\alpha) }{ m b},  n\sqrt{ \frac{\max(\epsilon,\alpha)}{ m}}\right)= \Omega(\log(1/\delta)),\quad m^{-1/2}\leqslant \epsilon \leqslant 1.
\end{align*}
Observe that $n\sqrt{\max(\epsilon,\alpha) / m})\geqslant n\sqrt{\epsilon/m} \geqslant n m ^{-1/2}\epsilon$ because $\epsilon\leqslant 1$.
Moreover we have $n b = \Pr[X=x]\leqslant (\sum_x\Pr[X=x]^2)^{1/2} \leqslant \sqrt{(1+\alpha)/m} $,
so that $\max(\epsilon,\alpha)  /m b = \Omega(1)n m^{-1/2}  \max(\epsilon,\alpha)  / \sqrt{1+\alpha} \geqslant \Omega(1) n m^{-1/2}\epsilon$, the inequality
is immediate when $\alpha \leqslant 1$ and for $\alpha>1$ follows because $\alpha/\sqrt{1+\alpha} = \Omega(1)$ while $\epsilon\leqslant 1$. In both cases the exponent is at least 
$\Omega(\log(1/\delta))$, therefore it remains to prove this in the last case
\begin{align*}
\max(\epsilon,\alpha)^2 /m^2  v^2= \Omega(\log(1/\delta)),\quad m^{-1/2}\leqslant \epsilon \leqslant 1.
\end{align*}
We have $\alpha/m = \sum_{x}(\Pr[X=x]-1/m)^2$ by the definition of $\alpha$, thus
$\sum_x\Pr[X=x]^3 = \frac{1}{m^2} +\frac{ 3\alpha}{m^2} + \sum_x(\Pr[X=x]-1/m)^3$; we can now bound 
$\sum_x\Pr[X=x]^3\leqslant (1+3\alpha)/m^2 + (\alpha/m)^{3/2}$.
and consequently $v^2 \leqslant (1+\alpha)/m n^2  + ( (1+3\alpha)/m^2 + (\alpha/m)^{3/2})/n$. Therefore
\begin{align*}
\max(\epsilon,\alpha)^2 /m^2  v^2 \geqslant \frac{1}{3}\min\left( n^2 m^{-1}\frac{\max(\epsilon,\alpha)^2}{1+\alpha},  n\frac{\max(\epsilon,\alpha)^2}{1+3\alpha},  n m^{-1/2} \frac{\max(\epsilon,\alpha)^2}{\alpha^{3/2}} \right)
\end{align*}
Since $\frac{\max(\epsilon,\alpha)^2}{1+\alpha}=\Omega(\epsilon^2)$, $\frac{\max(\epsilon,\alpha)^2}{1+3\alpha} = \Omega(\epsilon^2)$
and $ \frac{\max(\epsilon,\alpha)^2}{\alpha^{3/2}} = \Omega(\epsilon^{1/2}) = \Omega(\epsilon)$ for $\epsilon \leqslant 1$
\begin{align*}
\max(\epsilon,\alpha)^2 /m^2  v^2 \geqslant \Omega(1)\min\left( n^2 m^{-1}\epsilon^2,  n\epsilon^2,  n m^{-1/2} \epsilon^{1/2}\right)
\end{align*}
Finally we use the assumption $\epsilon \geqslant m^{-1/2}$ which gives us
\begin{align*}
\max(\epsilon,\alpha)^2 /m^2  v^2 \geqslant \Omega(1)\min( (n m^{-1/2})^2,  n m^{-1/2}\epsilon) = \Omega(\min(\log^2(1/\delta),\log(1/\delta))) 
\end{align*}
which finishes the proof because $\delta<1$ and $\log^2(1/\delta)\geqslant \log(1/\delta)$.
\end{proof}

\begin{proof}[Proof of \Cref{cor:collision_estimation}]
Observe that we can bound $\sum_{x}\Pr[X=x]^3< (\Pr[X=x] \sum_x\Pr[X=x]^2)^{3/2}$ 
which implies $v^2 \leqslant Q/n^2 + Q^{3/2}/n$ and $b\leqslant Q^{1/2}/n$. 
Therefore the exponent in \Cref{thm:main} is at least $\Omega(1)\min(\epsilon^2 Q^2/v^2, \epsilon Q / b, n\sqrt{\epsilon Q}) = \Omega(1)\min(n^2 Q \epsilon^2,n Q^{1/2}\epsilon^2,\epsilon n Q^{1/2},n\sqrt{\epsilon Q})$, and the claim on collision estimation follows because $\epsilon \leqslant 1$.
\end{proof}

\section{Proof of Main Result}\label{sec:proof}

\subsection{Collision Estimator as Function of Histogram}

The first trick is to condition on possible values $x$ in the sample. We have
\begin{align}
    \tilde{Q} = \frac{1}{n(n-1)}\sum_x \sum_{i\not=j} \mathbb{I}(X_i=x)\mathbb{I}(X_j=x)
\end{align}
Next we decompose the estimator into the sum of contributions from different $x$
\begin{align}\label{eq:estimator_sum}
    \tilde{Q} = \frac{1}{n(n-1)}\sum_{x}\tilde{Q}_x,\quad \tilde{Q}_x =  S_x^2-S_x,\quad S_x =\sum_i \mathbb{I}(X_i=x)
\end{align}
which is the relation to the histogram of the sample $X_1,\ldots,X_n$, as  $S_x$ is the load of bin $x$.
Observe also that $\tilde{Q}_x/n(n-1)$ is, for each $x$, an unbiased estimator for $\Pr[X=x]^2$.

\subsection{Utilizing Negative Dependence}

\begin{lemma}[Contributions from Bins are Negatively Dependent]\label{claim:neg_dependent}
Random variables $\{S^2_x-S_x\}_x$, and therefore $\tilde{Q}_x$ (defined in \Cref{eq:estimator_sum}) are negatively dependent.
\end{lemma}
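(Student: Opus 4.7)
The plan is to derive the claim from two classical facts about negative association (NA) in the balls-into-bins model. First I would invoke the well-known result from \cite{joag1983negative,dubhashi1996balls} that the load vector $(S_x)_x$ obtained by throwing $n$ independent balls into bins with arbitrary probabilities $\Pr[X=x]$ is negatively associated. In our setting the bins are indexed by the values $x$ in the support of $X$, and $S_x = \sum_i \mathbb{I}(X_i = x)$ is exactly the number of balls falling into bin $x$; the constraint $\sum_x S_x = n$ is the intuitive source of the negative dependence.

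Second, I would appeal to the standard closure property: applying coordinatewise monotone functions to \emph{disjoint} subsets of NA variables preserves NA. Here $\tilde{Q}_x = S_x^2 - S_x = S_x(S_x-1)$ depends on $S_x$ alone, so the subsets are trivially singletons and therefore disjoint. The function $g(s) = s(s-1)$ is non-decreasing on the actual support $\{0,1,2,\ldots\}$ of $S_x$, with values $0,0,2,6,12,\ldots$. To cleanly invoke the closure property over all of $\mathbb{R}$ I would replace $g$ by its monotone extension $\tilde{g}(s) \triangleq \max(s(s-1),0)$, which is globally non-decreasing and agrees with $g$ almost surely because $S_x\geqslant 0$. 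The closure property then yields that $(\tilde{Q}_x)_x = (\tilde{g}(S_x))_x$ is negatively associated, hence negatively dependent in the sense required.

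The main obstacle is cosmetic rather than substantive: stating the monotonicity condition carefully, since $s(s-1)$ is not monotone on $\mathbb{R}$ but is monotone on the actual range of $S_x$. The truncation $\tilde{g}$ above sidesteps this neatly without altering any joint distribution. A secondary check worth making is that the NA notion matches the form of negative dependence used later in the paper, namely the multiplicative MGF inequality $\mathbf{E}\prod_x \exp(t\tilde{Q}_x)\leqslant \prod_x \mathbf{E}\exp(t\tilde{Q}_x)$ and its moment counterpart feeding into \Cref{prop:latala_simple}; both follow directly from the NA property specialized to increasing functions $z\mapsto \exp(tz)$ (for $t\geqslant 0$) and to even-power bounds after symmetrization, so no additional work beyond the two steps above is needed.
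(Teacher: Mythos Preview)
Your proposal is correct and follows essentially the same two-step scheme as the paper: establish negative association of the underlying bin quantities, then pass to $S_x^2-S_x$ via the closure of NA under monotone functions on disjoint blocks. The only difference is granularity: the paper starts one level lower, first arguing that the doubly-indexed indicators $(\mathbb{I}(X_i=x))_{i,x}$ are NA (zero-one property plus augmentation over independent $i$) and then applies the multivariate increasing map $f(u_1,\ldots,u_n)=\sum_{i\neq j}u_iu_j$ to the block $\{\mathbb{I}(X_i=x)\}_i$, whereas you invoke directly that the load vector $(S_x)_x$ is NA and then apply the univariate $g(s)=s(s-1)$. Your shortcut is perfectly legitimate---the NA of multinomial counts is exactly the cited balls-and-bins result---and your truncation $\tilde g(s)=\max(s(s-1),0)$ is a clean way to enforce global monotonicity, though strictly speaking monotonicity on the support $\{0,1,2,\ldots\}$ already suffices for the closure property.
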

\begin{proof}
Observe that for any fixed $i$ the random variables $\mathbb{I}(X_i=x)$, indexed  by $x$, are negatively dependent because they are boolean and add up to one (zero-one property, see  Lemma 8 in~\cite{dubhashi1996balls}).
Since $X_i$ for different $i$ are independent, we obtain that $(\mathbb{I}(X_i=x))_{i,x}$ indexed by \emph{both} $i$ and $x$ are negatively dependent (augmentation property, see Proposition 7 part 1 in~\cite{dubhashi1996balls}).
Observe that $S_x^2-S_x = f((\mathbb{I}(X_i=x))_i)$ with $f(u_1,\ldots,u_n) = \sum_{i\not=j} u_i u_j$ increasing in each $u_i$ when $u_i\geqslant 0$. Applying increasing functions to non-overlapping subsets of negatively dependent variables produces
variables that are also negatively dependent (aggregation by monotone functions, see Proposition 7 part 2 in~\cite{dubhashi1996balls}), therefore $S^2_x-S_x$ are negatively dependent. Same holds for $\tilde{Q}_x$ which differ only by a scaling factor.
\end{proof}

\subsection{Concentration in Single Bins}

We will now study the properties of $\tilde{Q}_x$, for each fixed value of $x$.  For brevity we denote 
$$
 p = \Pr[X=x]
$$

\subsubsection{Centering}
We start by centering random variables $\mathbb{I}(X_i=x)$. Direct calculations show that
\begin{proposition}[Estimator Bin Contributions]\label{sec:bin_contrib}
The centered contribution from bin $x$ is
\begin{align}\label{eq:bin_contributions}
S^2_x - S_x - \mathbf{E}[S^2_x-S_x] = U_2+ 2(n-1)p \cdot U_1
\end{align}
where $U_1$ and $U_2$ are zero-mean given by 
\begin{align}
U_1 = \sum_i \xi_i , \quad U_2 = \sum_{i\not=j}\xi_i\xi_j,\quad \xi_i = \mathbb{I}(X_i=x)-\Pr[X=x]
\end{align}
\end{proposition}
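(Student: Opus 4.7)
The plan is a direct algebraic expansion: rewrite each indicator in the definition of $S_x^2 - S_x$ as its centered version plus its mean, distribute, and collect terms by order in $\xi$. Since $\mathbb{I}(X_i = x) = \xi_i + p$, the identity $S_x^2 - S_x = \sum_{i \ne j} \mathbb{I}(X_i = x)\mathbb{I}(X_j = x)$ from \eqref{eq:estimator_sum} becomes, after expansion,
\begin{align*}
S_x^2 - S_x = \sum_{i \ne j}(\xi_i + p)(\xi_j + p) = \sum_{i \ne j} \xi_i \xi_j + 2p \sum_{i \ne j}\xi_i + n(n-1)p^2,
\end{align*}
where I have used the symmetry of the pair $(i,j)$ in the cross terms $p\xi_i + p\xi_j$ and the fact that for each fixed $i$ the index $j$ ranges over $n-1$ values.

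Next I identify the three summands. The first is exactly $U_2$. For the second, the inner sum equals $(n-1)\sum_i \xi_i = (n-1) U_1$, producing the stated coefficient $2(n-1)p$. The third is a constant, which I will subtract off at the centering step.

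Finally I compute the expectation. Since the $X_i$ are i.i.d.\ and $\mathbb{E}[\xi_i] = 0$ by construction, one gets $\mathbb{E}[U_1] = 0$ trivially and $\mathbb{E}[U_2] = \sum_{i \ne j}\mathbb{E}[\xi_i]\mathbb{E}[\xi_j] = 0$ using independence for $i \ne j$. Hence $\mathbb{E}[S_x^2 - S_x] = n(n-1)p^2$, and subtracting this from the expanded formula leaves precisely $U_2 + 2(n-1)p \cdot U_1$, as claimed.

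There is no genuine obstacle here; the statement is really a bookkeeping lemma that isolates a linear (Hoeffding-type) piece $U_1$ from a quadratic (decoupling/Hanson--Wright-amenable) piece $U_2$, so that subsequent sections can treat them separately. The only thing worth stating carefully is the mean-zero property of $U_2$, which rests on pairwise independence of the $X_i$ and is the reason the cross product in the expansion of the variance drops out cleanly.
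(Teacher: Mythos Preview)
Your proof is correct and follows essentially the same route as the paper: both substitute $\mathbb{I}(X_i=x)=\xi_i+p$ into $\sum_{i\ne j}\mathbb{I}(X_i=x)\mathbb{I}(X_j=x)$, expand, and identify the constant term with $\mathbf{E}[S_x^2-S_x]=n(n-1)p^2$. The paper phrases it as $Z_iZ_j-p^2=\xi_i\xi_j+p(\xi_i+\xi_j)$ summed over $i\ne j$, but the content is identical.
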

\begin{proof}
Let $Z_ i = \mathbb{I}(X_i=x)$, then $S^2_x-S_x - \mathbf{E}[S^2_x-S_x] = \sum_{i\not=j}(Z_i Z_j-\mathbf{E}[Z_i Z_j])$ and $\mathbf{E}[Z_i Z_j] = p^2$ when $i\not=j$.
The result follows from the identity $Z_i Z_j - p^2 = (\xi_i+p)(\xi_j+p)-p^2 = \xi_i\xi_j + p(\xi_i+\xi_j)$, summed over pairs $i\not=j$.
\end{proof}

\begin{note}[Symmetric Polynomials]
Observe that $U_1$ and $U_2$ are the first and second elementary symmetric polynomials in variables $\xi_i$
\end{note}

\subsubsection{Bounding Variance}

\begin{corollary}[Total Variance of Collision Estimator]\label{cor:variance_full_analysis}
The contribution from bin $x$ satisfies
$$\mathbf{Var}[\tilde{Q}_x] = O(p^2/n^2+p^3/n^3),\quad p=\Pr[X=x]$$
and the total variance of the collision estimator is
$$
\mathbf{Var}[\tilde{Q}] \leqslant \sum_{x}\mathbf{Var}[\tilde{Q}_x] = O( \sum_{x}\Pr[X=x]^2/n^2 + \sum_{x}\Pr[X=x]^3/n^3)
$$
\end{corollary}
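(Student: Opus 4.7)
The plan is to combine the centering decomposition of Proposition~\ref{sec:bin_contrib} with a short moment calculation, and then invoke the negative dependence from Lemma~\ref{claim:neg_dependent} to pass from per-bin variances to the total variance.

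First I would use the identity $\tilde{Q}_x - \mathbf{E}\tilde{Q}_x = U_2 + 2(n-1)p\,U_1$ and square it, obtaining
$$\mathbf{Var}[\tilde{Q}_x] = \mathbf{E}[U_2^2] + 4(n-1)^2 p^2\,\mathbf{E}[U_1^2] + 4(n-1)p\,\mathbf{E}[U_1 U_2].$$
The cross term should vanish: $\mathbf{E}[U_1 U_2] = \sum_i \sum_{j\neq k}\mathbf{E}[\xi_i\xi_j\xi_k]$ cannot have all three indices equal (since $j\neq k$), so at least one $\xi_\ell$ remains isolated and $\mathbf{E}[\xi_\ell]=0$ kills every summand by independence.

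Next I would evaluate the two remaining moments by the standard pairing argument on iid centered Bernoulli $\xi_i$ with $\mathbf{E}[\xi_i^2]=p(1-p)$. For $U_1$, independence directly gives $\mathbf{E}[U_1^2] = np(1-p)=O(np)$. For $\mathbf{E}[U_2^2]=\sum_{i\neq j,\,k\neq\ell}\mathbf{E}[\xi_i\xi_j\xi_k\xi_\ell]$, only the two pairings $\{i,j\}=\{k,\ell\}$ contribute (any other configuration leaves some $\xi$ at odd multiplicity), yielding $2n(n-1)(p(1-p))^2=O(n^2 p^2)$. Collecting the two contributions gives a per-bin variance of order $n^2p^2 + n^3p^3$ at the unnormalized scale, which matches the stated bound after accounting for the global $1/n(n-1)$ factor.

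For the total variance I would appeal to Lemma~\ref{claim:neg_dependent}: negative dependence of $\{\tilde{Q}_x\}_x$ implies $\mathbf{Cov}(\tilde{Q}_x,\tilde{Q}_{x'})\leqslant 0$ for $x\neq x'$, so
$$\mathbf{Var}\Big[\sum_x \tilde{Q}_x\Big] \;\leqslant\; \sum_x \mathbf{Var}[\tilde{Q}_x],$$
and summing the per-bin bounds finishes the proof. There is no real obstacle here: the moment expansion of $U_1,U_2$ is routine combinatorics, and the only conceptual ingredient is the subadditivity of variance under negative association. The mildest care is needed to keep the $1/n(n-1)$ normalization consistent between $\tilde{Q}$ and $\tilde{Q}_x$.
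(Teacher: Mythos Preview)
Your proposal is correct and follows essentially the same route as the paper: use the decomposition of Proposition~\ref{sec:bin_contrib}, observe $\mathbf{E}[U_1U_2]=0$, compute $\mathbf{E}[U_1^2]$ and $\mathbf{E}[U_2^2]$ directly, and then invoke negative dependence (Lemma~\ref{claim:neg_dependent}) for the subadditivity of variance. If anything, your write-up is slightly more careful than the paper's (you justify the vanishing cross term and the pairing count for $\mathbf{E}[U_2^2]$ explicitly), and you correctly flag the normalization bookkeeping---indeed the unnormalized per-bin variance is $\Theta((np)^2+(np)^3)$, so after dividing by $(n(n-1))^2$ the second term should read $p^3/n$ rather than $p^3/n^3$ as printed.
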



\begin{proof}
By inspection of \Cref{eq:bin_contributions} we see that $U_1$ and $U_2$ are uncorrelated so that
$$
\mathbf{Var}[\tilde{Q}_x] = \mathbf{Var}[U_1] + \Theta ((np)^2)\mathbf{Var}[U_2]
$$
Easy inspection shows $\mathbf{E}[U_2^2] = n(n-1)p^2 $ and $\mathbf{E}[U_1^2] = np$;
since $\mathbf{E}[U_2]=\mathbf{E}[U_1]=0$ this shows $\mathbf{Var}[\tilde{Q}_x] = \Theta( (np)^2 + (np)^3)$.
The total variance bound follows because by negative dependence 
$
\mathbf{Var}[\sum_x  \tilde{Q}_x] \leqslant \sum_x \mathbf{Var}[\tilde{Q}_x]$.
\end{proof}

\subsubsection{Bounding Moments by Decoupling and Symmetrization}

We will bound higher moments of $U_2$ and $U_1$ in \Cref{eq:bin_contributions} in terms of binomial moments.

The following is a well-known decoupling inequality (cf. Theorem 6.1.1 in~\cite{vershynin2018high})
\begin{proposition}[Decoupling for Quadratic Forms]\label{prop:decouple}
Let $\xi = (\xi_1,\ldots,\xi_n)$  be a random vector with centered independenet components,
let $A = a_{i,j}$ be a \emph{diagonal-free} matrix of shape $n\times n$. Then for any
convex function $f$
$$
\mathbf{E} f(\xi^TA\xi) \leqslant 4\mathbf{E}f(\xi^T A \xi)
$$
where $\xi'$ is independent and identically distributed  as $\xi$.
\end{proposition}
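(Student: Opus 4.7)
My plan is to prove the standard decoupling inequality via the Bernoulli selector trick: I will randomly partition $\{1,\ldots,n\}$ into two blocks and apply Jensen's inequality twice, once in the selector variables and once conditionally in $\xi$. The key representation is that if $\delta_1,\ldots,\delta_n$ are i.i.d.\ Bernoulli$(1/2)$, independent of $\xi$, then $\mathbf{E}_\delta[4\delta_i(1-\delta_j)]=1$ for all $i\neq j$; since $A$ is diagonal-free, this yields
$$
\xi^T A \xi \;=\; \mathbf{E}_\delta\!\left[\,4\sum_{i,j} \delta_i(1-\delta_j)\,a_{ij}\,\xi_i\xi_j\,\right].
$$

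In the first step I apply Jensen's inequality in the selectors, and by Fubini it suffices to prove, for every fixed realization of $\delta$ inducing the disjoint sets $I=\{i:\delta_i=1\}$ and $J=\{j:\delta_j=0\}$,
$$
\mathbf{E}_\xi\, f\!\left(\,4\sum_{i\in I,\, j\in J} a_{ij}\,\xi_i\xi_j\,\right) \;\leqslant\; \mathbf{E}_{\xi,\xi'}\, f\!\left(4\,\xi^T A \xi'\right).
$$
Since $I\cap J=\emptyset$, the families $(\xi_i)_{i\in I}$ and $(\xi_j)_{j\in J}$ are independent, so I can replace the second block by the corresponding coordinates of an independent copy $\xi'$ without changing the joint law. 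It then remains to inflate the partial sum $\sum_{i\in I,j\in J}a_{ij}\xi_i\xi'_j$ to the full bilinear form $\xi^T A\xi'$. For this I condition on $\mathcal{F}$, the $\sigma$-field generated by $(\xi_i)_{i\in I}$ and $(\xi'_j)_{j\in J}$: every missing variable $\xi_i$ for $i\notin I$ and $\xi'_j$ for $j\notin J$ is centered and independent of $\mathcal{F}$, so all unwanted cross-terms in $\xi^T A \xi'$ vanish in expectation and $\mathbf{E}[\xi^T A\xi'\mid \mathcal{F}]$ equals exactly the partial sum. A second application of Jensen, now conditionally, closes the chain.

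The hard part is this last reverse-Jensen step. It is easy to pass from a full bilinear form to a restricted one by conditioning, but here I must \emph{recognize} the restricted sum as the conditional expectation of the full form, and that recognition is what makes Jensen applicable in the right direction. This identification uses both the centering hypothesis on $\xi$ and the fact that $\xi'$ is an independent copy (which also kills the diagonal $i=j$ terms since $\mathbf{E}\xi_i\mathbf{E}\xi'_i=0$). The diagonal-free assumption on $A$ is used separately in the initial selector identity, since $\mathbf{E}_\delta[4\delta_i(1-\delta_j)]=1$ fails when $i=j$; the constant $4$ arises from the bias $1/2$, which is the optimal choice.
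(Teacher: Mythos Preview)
Your argument is the standard Bernoulli-selector proof of decoupling and is correct; it is essentially the proof of Theorem~6.1.1 in Vershynin, which the paper simply cites without reproducing. So there is no ``paper's own proof'' to compare against here---the paper treats this proposition as a black box from the literature, and what you wrote is exactly that black box unpacked.

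One point worth flagging: the proposition as stated in the paper has a typo (the right-hand side should read $\xi^T A\xi'$, not $\xi^T A\xi$), and more subtly the constant $4$ appears \emph{outside} $f$ in the paper's display, whereas what you actually prove---and what the selector identity $\mathbf{E}_\delta[4\delta_i(1-\delta_j)]=1$ naturally yields---is $\mathbf{E} f(\xi^T A\xi)\leqslant \mathbf{E} f(4\,\xi^T A\xi')$, with the $4$ \emph{inside} $f$. For general convex $f$ these are not equivalent. Your version is the correct one (and the one in Vershynin); the paper's downstream use is only for $f(u)=|u|^d$, where the resulting $4^d$ is absorbed into $O(d)^d$ constants anyway, so the discrepancy is harmless in context but is a genuine misstatement of the proposition.
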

We also need the following standard fact on symmetrization (cf. Lemma 6.1.2 in~\cite{vershynin2018high})
\begin{proposition}[Symmetrization Trick]\label{prop:symmetrize}
Let $Y,Z$ be independent and $\mathbf{E}Z=0$, then $\mathbf{E}f(Y)\leqslant \mathbf{E}f(Y+Z)$ for any convex $f$.
\end{proposition}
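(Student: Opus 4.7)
The plan is to exploit Jensen's inequality together with the hypothesis $\mathbf{E}Z=0$. Since $Y$ and $Z$ are independent, I can condition on $Y$ and treat it as frozen. The key observation is that $Y$ itself can be rewritten as a conditional expectation of $Y+Z$: namely $Y = Y + \mathbf{E}[Z] = \mathbf{E}[Y+Z \mid Y]$, where the last equality uses independence of $Z$ from $Y$ together with the centering assumption.

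Given this rewriting, I would apply the function $f$ to both sides and invoke the conditional Jensen inequality, which holds because $f$ is convex. This yields $f(Y) = f\bigl(\mathbf{E}[Y+Z \mid Y]\bigr) \leqslant \mathbf{E}\bigl[f(Y+Z) \mid Y\bigr]$ almost surely. Taking the unconditional expectation of both sides and using the tower property gives $\mathbf{E} f(Y) \leqslant \mathbf{E}\bigl[\mathbf{E}[f(Y+Z)\mid Y]\bigr] = \mathbf{E} f(Y+Z)$, which is exactly the claim.

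There is essentially no obstacle here: the only subtlety is ensuring that the conditional-expectation form of Jensen's inequality is applicable, which only requires $f$ to be convex and $f(Y+Z)$ to be integrable (an implicit assumption, otherwise the right-hand side is $+\infty$ and the inequality is trivial). No further structure on the distributions of $Y$ or $Z$ is needed, and in particular no symmetry of $Z$ is required, which is why this formulation is often the convenient starting point for the classical symmetrization step that later introduces an independent sign-symmetric copy.
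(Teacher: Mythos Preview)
Your argument is correct and is exactly the standard proof: freeze $Y$, use $\mathbf{E}Z=0$ and independence to write $Y=\mathbf{E}[Y+Z\mid Y]$, apply conditional Jensen, then take expectations. The paper does not give its own proof of this proposition---it simply cites it as a standard fact (Lemma~6.1.2 in Vershynin)---and the argument you wrote is precisely the one found there.
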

\begin{note}
These are crutial for proving Hanson-Wright's Lemma.
\end{note}
By combining \Cref{prop:decouple} and \Cref{prop:symmetrize} we obtain
that when calculating the moments of $U_2$ we can assume (loosing a constant factor) that
$\xi_i$ are symmetric and decoupled. 
\begin{lemma}[Bounding Quadratic Contributions]\label{lemma:quadratic_contrib}
For $\xi_i\sim^{iid} \mathsf{Bern}(p)-p$ and even $d$
$$
\mathbf{E}|\sum_{i\not=j}\xi_i\xi_j|^d \leqslant 4\cdot \mathbf{E}|\sum_{i\not=j} \eta_i \eta'_j|^d,\quad \eta_1,\eta_2\ldots,\eta'_1,\eta'_2\ldots\sim^{iid} \eta-\eta',\quad \eta,\eta'\sim \mathsf{Bern}(p)
$$
\end{lemma}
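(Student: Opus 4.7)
The plan is to use the two standard quadratic-form tricks in sequence: decoupling (Proposition~\ref{prop:decouple}) to split the quadratic form into a bilinear one on two independent copies of $\xi$, and then symmetrization (Proposition~\ref{prop:symmetrize}) applied twice to replace the centered Bernoullis $\xi_i$ with the fully symmetric variables $\eta-\eta'$ appearing on the right-hand side.

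First I would write $\sum_{i\neq j}\xi_i\xi_j = \xi^T A\xi$, where $A$ is the $n\times n$ matrix with $A_{ij}=1$ for $i\neq j$ and $A_{ii}=0$, hence diagonal-free. Applying Proposition~\ref{prop:decouple} with the convex function $u\mapsto |u|^d$ immediately yields
\[
\mathbf{E}\Bigl|\sum_{i\neq j}\xi_i\xi_j\Bigr|^d \;\leqslant\; 4\,\mathbf{E}\Bigl|\sum_{i\neq j}\xi_i\xi'_j\Bigr|^d,
\]
where $\xi'$ is an independent copy of $\xi$. The constant $4$ in the target inequality comes entirely from this step; the remaining steps will be constant-free.

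Next I would symmetrize the two sides of the decoupled bilinear form one at a time. Let $\xi''$ be another independent copy of $\xi$; conditioned on $\xi'$, the variable $Z=-\sum_{i\neq j}\xi''_i\xi'_j$ is independent of $Y=\sum_{i\neq j}\xi_i\xi'_j$ and has $\mathbf{E}[Z\mid \xi']=0$ because every $\xi''_i$ is centered. Proposition~\ref{prop:symmetrize} applied conditionally, then integrated against $\xi'$, gives
\[
\mathbf{E}\Bigl|\sum_{i\neq j}\xi_i\xi'_j\Bigr|^d \;\leqslant\; \mathbf{E}\Bigl|\sum_{i\neq j}(\xi_i-\xi''_i)\xi'_j\Bigr|^d.
\]
Repeating the same argument with an independent copy $\xi'''$ of $\xi'$, this time conditioning on $(\xi,\xi'')$, symmetrizes the other side and produces
\[
\mathbf{E}\Bigl|\sum_{i\neq j}(\xi_i-\xi''_i)\xi'_j\Bigr|^d \;\leqslant\; \mathbf{E}\Bigl|\sum_{i\neq j}(\xi_i-\xi''_i)(\xi'_j-\xi'''_j)\Bigr|^d.
\]

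The last step is a distributional identification: since $\xi_i=\mathsf{Bern}(p)-p$, the centering cancels in each difference, so $\xi_i-\xi''_i$ has the law of $\eta-\eta'$ with $\eta,\eta'\sim^{iid}\mathsf{Bern}(p)$, and analogously on the primed side. Relabelling $\eta_i:=\xi_i-\xi''_i$ and $\eta'_j:=\xi'_j-\xi'''_j$ yields exactly the right-hand side of the lemma. The main obstacle is bookkeeping rather than calculation: each round of symmetrization must be performed \emph{conditionally} on the "other" side of the bilinear form so that the added increment is conditionally zero-mean and conditionally independent of the base variable, and the two rounds must use genuinely fresh copies to avoid spurious correlations. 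Once the conditioning is set up correctly, the proof is a pure assembly of the two prebuilt inequalities.
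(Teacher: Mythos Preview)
Your proof is correct and follows essentially the same route as the paper: first apply decoupling (Proposition~\ref{prop:decouple}) to pick up the factor $4$, then symmetrize each side of the bilinear form via Proposition~\ref{prop:symmetrize}. The only cosmetic difference is that the paper symmetrizes one coordinate $\xi_i$ at a time (conditioning on all remaining variables) and iterates, whereas you symmetrize all of $\xi$ in one shot conditionally on $\xi'$, then all of $\xi'$ conditionally on the already-symmetrized first factor; both are valid and lead to the same bound.
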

\begin{note}
The off-diagonal assumption, true in our case, is crucial to apply decoupling.
\end{note}
Similarly we estimate the term $U_1$
\begin{lemma}[Bounding Linear Contributions]\label{lemma:linear_contrib}
For $\xi_i\sim^{iid} \mathsf{Bern}(p)-p$ and even $d\geqslant 2$ 
$$
\mathbf{E}|\sum_{i}\xi_i|^d \leqslant \mathbf{E}|\sum_{i} \eta_i |^d,\quad \eta_1,\eta_2\ldots \sim^{iid} \eta-\eta',\quad \eta,\eta'\sim \mathsf{Bern}(p)
$$
\end{lemma}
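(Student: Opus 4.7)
}
The plan is to apply the Symmetrization Trick (\Cref{prop:symmetrize}) in a single step, with essentially no computation. Write $S = \sum_i \xi_i$ with $\xi_i = \mathbb{I}(X_i=x) - p$, and introduce an independent copy $S' = \sum_i \xi_i'$ where $\xi_i' \sim^{iid} \mathsf{Bern}(p) - p$ are independent of $S$. Because the $\xi_i'$ are centred we have $\mathbf{E} S' = 0$, so \Cref{prop:symmetrize} applied to the convex function $f(u) = |u|^d$ (convex since $d \geqslant 2$ is even) with $Y = S$ and $Z = -S'$ yields
\begin{align*}
\mathbf{E}\Bigl|\sum_i \xi_i\Bigr|^d \;=\; \mathbf{E}|S|^d \;\leqslant\; \mathbf{E}|S - S'|^d.
\end{align*}

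Next I unfold $S - S'$. Since the constant offsets cancel,
\begin{align*}
S - S' \;=\; \sum_{i=1}^n (\xi_i - \xi_i') \;=\; \sum_{i=1}^n \bigl(\mathbb{I}(X_i=x) - \mathbb{I}(X_i'=x)\bigr),
\end{align*}
and the variables $\eta_i := \mathbb{I}(X_i=x)$ and $\eta_i' := \mathbb{I}(X_i'=x)$ are iid $\mathsf{Bern}(p)$. Hence $S - S'$ is exactly a sum of $n$ iid copies of $\eta - \eta'$ in the notation of the lemma, which gives the stated inequality.

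\paragraph*{Difficulty.} There is no real obstacle: unlike the quadratic case in \Cref{lemma:quadratic_contrib} no decoupling is needed, only symmetrization, and the centring of $\xi_i$ is exactly what lets one introduce the independent copy with zero mean. The only point that needs a word of care is convexity of $u \mapsto |u|^d$, which is guaranteed by $d$ being an even positive integer (indeed any $d \geqslant 1$ suffices).
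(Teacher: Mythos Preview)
Your proof is correct and uses the same idea as the paper, namely the symmetrization trick of \Cref{prop:symmetrize}. The only cosmetic difference is that the paper replaces the $\xi_i$ one coordinate at a time (conditioning on the remaining variables, as in the proof of \Cref{lemma:quadratic_contrib}), whereas you apply symmetrization once to the whole sum $S$ via an independent copy $S'$; both routes yield the same bound with no loss.
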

\begin{note}
For the proof we need only symmetrization.
\end{note}
Finally we reformulate the obtained bounds in terms of binomials
\begin{corollary}[Binomial Bounds for Linear and Quadratic Contributions]\label{cor:contrib_binom_bounds}
For $\xi_i\sim^{iid}\mathsf{Bern}(p)-p$, even $d\geqslant 2$, and $S,S'\sim^{iid} \mathsf{Binomial}(n,p)$ we have
$$
\mathbf{E}|\sum_{i}\xi_i|^d \leqslant \mathbf{E}(S-S')^{d},\quad
\mathbf{E}|\sum_{i\not=j}\xi_i\xi_j|^d \leqslant 16(\mathbf{E}(S-S')^{d})^2
$$
\end{corollary}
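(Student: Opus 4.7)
The plan is to derive both bounds from \Cref{lemma:linear_contrib} and \Cref{lemma:quadratic_contrib} by recognising that sums of iid differences of Bernoullis are themselves differences of independent binomials, and by handling the diagonal correction that arises in the quadratic case.

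For the \emph{linear} bound, \Cref{lemma:linear_contrib} yields $\mathbf{E}|\sum_i \xi_i|^d \leqslant \mathbf{E}|\sum_i \eta_i|^d$ with $\eta_i \sim^{iid}\eta-\eta'$ and $\eta,\eta'\sim\mathsf{Bern}(p)$ independent. Writing each $\eta_i$ as a difference $\eta_i^{(1)}-\eta_i^{(2)}$ of iid $\mathsf{Bern}(p)$ summands and regrouping the two partial sums gives $\sum_i \eta_i \stackrel{d}{=} S-S'$ with independent $S,S'\sim\mathsf{Binom}(n,p)$; since $d$ is even, $|S-S'|^d=(S-S')^d$, which is the linear claim.

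For the \emph{quadratic} bound, \Cref{lemma:quadratic_contrib} contributes the first factor of $4$ and reduces the task to bounding $\mathbf{E}|\sum_{i\neq j}\eta_i\eta'_j|^d$, where $\{\eta_i\}$ and $\{\eta'_j\}$ are two independent iid families, each distributed as $\eta-\eta'$. I use the algebraic identity
$$
\sum_{i\neq j}\eta_i\eta'_j \;=\; \Big(\sum_i \eta_i\Big)\Big(\sum_j \eta'_j\Big)-\sum_i \eta_i\eta'_i \;=\; TT'-R,
$$
in which $T,T'$ are \emph{independent} copies of $S-S'$ by the linear step above. Independence immediately gives $\mathbf{E}(TT')^d=(\mathbf{E}(S-S')^d)^2$, which is the intended leading term. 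The diagonal $R=\sum_i \eta_i\eta'_i$ is a sum of iid, centred, bounded random variables with per-coordinate variance $(2p(1-p))^2$, strictly smaller than the $2p(1-p)$ per-term variance of $S-S'$; a direct computation at $d=2$ in fact gives $\mathbf{E}(TT'-R)^2 = (1-1/n)\mathbf{E}(TT')^2$. For general even $d$ the $L^d$-triangle inequality combined with a uniform-in-$d$ moment bound on $R$ absorbs the diagonal into a constant factor, producing the remaining factor of $4$ and totalling $16$.

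The main obstacle is controlling the diagonal $R$ uniformly in $d$ when subtracted from $TT'$. My intended route is to condition on $\eta$, so that $TT'-R = \sum_j (T-\eta_j)\eta'_j$ becomes a linear form in the iid symmetric $\eta'_j$'s; a Khintchine-type inequality then controls the conditional $d$-th moment by $(Kd)^{d/2}(2p(1-p))^{d/2}\bigl(\sum_j (T-\eta_j)^2\bigr)^{d/2}$, after which averaging over $\eta$ and reapplying \Cref{lemma:linear_contrib} to $T^2$ and $\sum_j \eta_j^2$ produces the desired $O(1)\cdot(\mathbf{E}(S-S')^d)^2$ bound with an absolute constant.
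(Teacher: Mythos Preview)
Your linear bound is correct and matches the paper.

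For the quadratic bound there is a genuine gap in both routes you sketch. The $L^d$-triangle inequality gives $\|TT'-R\|_d \leqslant \|TT'\|_d + \|R\|_d$; even after you show $\|R\|_d \leqslant \|TT'\|_d$ (which is true but itself needs an argument), raising to the $d$-th power yields only $\mathbf{E}|TT'-R|^d \leqslant 2^d(\mathbf{E}(S-S')^d)^2$, not the constant factor $4$ you claim. A factor growing like $2^d$ would be absorbed in the later $O(d)^d$ bounds, but it does not prove the corollary as stated. Your Khintchine route is more seriously flawed: the bound $(Kd)^{d/2}(2p(1-p))^{d/2}(\sum_j(T-\eta_j)^2)^{d/2}$ is not what Khintchine delivers for the $\{-1,0,1\}$-valued $\eta'_j$, which are $1$-subgaussian but not $\sigma$-subgaussian when $\sigma^2=2p(1-p)$ is small. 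And even granting your bound, the last comparison would require $(Kdn\sigma^2)^{d/2}\leqslant O(1)\,\mathbf{E}(S-S')^d$ with an absolute constant, which fails already for $n=2$, $p=1/2$ and $d$ large (there $\mathbf{E}(S-S')^d = 1/2 + 2^{d-3}$ while $(dn\sigma^2)^{d/2}=d^{d/2}$). The phrase ``reapplying \Cref{lemma:linear_contrib} to $T^2$ and $\sum_j\eta_j^2$'' does not correspond to any step that lemma supports.

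The paper's argument avoids all of this with a one-line positivity observation. Writing $A=\sum_{i\neq j}\eta_i\eta'_j$ and $B=\sum_i\eta_i\eta'_i$, so that $A+B=TT'$, it expands $\mathbf{E}(A+B)^d=\sum_{k}\binom{d}{k}\mathbf{E}[A^kB^{d-k}]$ and notes that every cross-moment $\mathbf{E}[A^kB^{d-k}]$ is nonnegative: expand $A^kB^{d-k}$ into monomials in the symmetric variables $\eta_i,\eta'_j$; any monomial containing an odd power of some $\eta_i$ or $\eta'_j$ has zero mean, and what remains is a nonnegative sum of even powers. Hence $\mathbf{E}A^d\leqslant\mathbf{E}(A+B)^d=(\mathbf{E}(S-S')^d)^2$ with \emph{no} constant loss, and the factor $4$ from \Cref{lemma:quadratic_contrib} alone already gives the stated $16$.
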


\begin{proof}[Proof of \Cref{cor:contrib_binom_bounds}]
The first bound follows directly as $\sum_{i}\eta_i$ is distributed as $S-S'$. To prove the second inequality, it suffices to show  for even $d\geqslant 2$ that
$$
\mathbf{E}(\sum_{i\not=j}\eta_{i}\eta'_j)^{d} \leqslant \mathbf{E}(\sum_{i\not=j}\eta_{i}\eta'_j + \sum_{i}\eta_i\eta'_i)^{d}
$$ 
because then $\mathbf{E}(\sum_{i\not=j}\eta_{i}+\sum_{i}\eta_i\eta'_i)^{d} = \mathbf{E} (\sum_{i}\eta_i)^d (\sum_{i}\eta'_i)^d = \mathbf{E}(\sum_{i}\eta_i)^d\cdot \mathbf{E}(\sum_{i}\eta'_i)^d $.

To prove the claim denote $A = \sum_{i\not=j}\eta_{i}\eta'_j $, $B=\sum_{i}\eta_i\eta'_i$ then we have to prove
$\mathbf{E}(A+B)^d \geqslant \mathbf{E}A^d$. Write $\mathbf{E}(A+B)^d = \sum_{k=0}^{d} \mathbf{E}[A^k B^{d-k}]$
and observe that $\mathbf{E}[A^k B^{d-k}]\geqslant 0$ which follows by expanding $A^k$ and $B^{d-k}$ into sums of products of 
$\eta_i,\eta'_i$ and utilizing their symmetry (terms with odd number of repetitions for some $\eta_i$ or $\eta'_i$ will have zero expectations, so we are left with non-negative square terms). Thus $\mathbf{E}(A+B)^d \geqslant \mathbf{E} A^d + \mathbf{E}B^d \geqslant \mathbf{E}A^d$.
\end{proof}
\begin{note}
An alternative bound avoids combinatorial argument and uses the triangle inequality to establish
$(\mathbf{E}|\sum_{i\not=j}\eta_i\eta_j|^d)^{1/d}\leqslant (\mathbf{E}|\sum_{i,j}\eta_i\eta'_j|^d)^{1/d}+(\mathbf{E}|\sum_{i=j}\eta_i\eta'_j|^d)^{1/d}$. The second term then behaves like a centered binomial with parameters $n,p^2$.
\end{note}

\begin{proof}[Proof of \Cref{lemma:quadratic_contrib}]
Write the $d$-th moment as $\mathbf{E}f(\sum_{i\not=j} \xi_i\xi_j)$ with convex $f(u) = |u|^d$.
By \Cref{prop:decouple} it is upper bounded by $4\mathbf{E}f(\sum_{i\not=j} \xi_i\xi'_j)$
where $\xi_i$ and $\xi'_i$ are indentically distributed and independent. Now look at some chosen $\xi_i$
and the expectation $\mathbf{E}f(\sum_{i\not=j} \xi_i\xi'_j)$ conditioned on the fixed values of the remaining variables (that is $\xi_j$ for $j\not=i$ and $\xi'_j$ for all $j$),
by \Cref{prop:symmetrize} we get that replacing $\xi_i$ by $\eta_i-\eta'_i$ where $\eta_i,\eta'_i$ are independent copies of $\xi_i$
gives an upper bound. We repeat this for all $\xi_i$ and the same for $\xi'_i$.
Note that each time we replace with the distribution $(\eta-p)-(\eta'-p) = \eta-\eta'$ where $\eta,\eta'\sim^{iid}\mathsf{Bern}(p)$.
\end{proof}
\begin{proof}[Proof of \Cref{lemma:linear_contrib}]
We replace $\xi_i$ iteratively as in the proof of \Cref{lemma:quadratic_contrib}.
\end{proof}



\subsubsection{Auxiliary Function}

The bounds in \Cref{lemma:symm_binom_moment} depends on expressions of form $a^{\ell}\ell^{b-\ell}$ that we analyze closer below
\begin{proposition}[Auxiliary Function]\label{prop:aux_func}
The function $g(\ell)\triangleq a^{\ell}\ell^{b-\ell}$, for any paramters $a,b>0$, is maximized at $\ell=\ell^{*}\triangleq b/W(b\mathrm{e}/a)$;
it increases for $0<\ell<\ell^{*}$ and decreases for $\ell^{*}<\ell<+\infty$.
where $W(\cdot)$ is the main branch of Lambert-W function.
\end{proposition}
\begin{proof}[Proof of \Cref{prop:aux_func}]
The derivative equals
$$
\frac{\partial g}{\partial \ell} = (a/\ell)^\ell \ell^{-1 + b} (b - \ell + \ell \log(a/\ell))
$$
and only the last factor can be possibly zero, therefore
$$
\frac{\partial g}{\partial \ell} = 0\Leftrightarrow   u\log u = b \mathrm{e}/a,\quad u\triangleq \mathrm{e}\ell/a 
$$
so the zero is at $u = \mathrm{e}^{W(b\mathrm{e}/a)} = b\mathrm{e} /a W(b/\mathrm{e}/a)$ and the formula for $\ell^{*}$ follows.
We also conclude that $g$ is monotone  in both invervals $0<\ell < \ell^{*}$ and $\ell^{*} < \ell < +\infty$.
By the equations above we see that $\frac{\partial g}{\partial \ell} > 0$ when $\ell\to 0$ and $\frac{\partial g}{\partial \ell} < 0$ when $\ell\to+\infty$, therefore we conclude that $g(\ell)$ increases for $0<\ell<\ell^{*}$ and decreases when $\ell^{*}<\ell<+\infty$.
\end{proof}

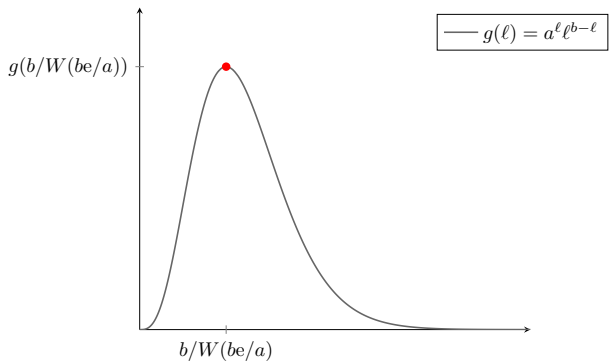
\begin{figure}[h!]
\begin{tikzpicture}[scale=0.75]
  \begin{axis}[
      samples=1001,
      enlarge y limits=true,
      axis lines=middle,
      xtick = \empty,
      extra x ticks = {1.1067},
      extra x tick labels = {$b/W(b\mathrm{e}/a)$},
      ytick = \empty,
      extra y ticks = {0.20409},
      extra y tick labels = {$g(b/W(b\mathrm{e}/a))$},
      ymax = 0.25,
     legend style={anchor=north}
    ]
\addplot [gray!80!black, thick, domain=0:5]  {(0.2/x)^x*x^3};
\addlegendentry { $g(\ell) =  a^{\ell}\ell^{b-\ell}$ };
\node [circle,fill=red,inner sep=1.5pt] (P) at (1.1067, 0.20409)  {};
  \end{axis}
\end{tikzpicture}
\caption{Auxiliary function $g(\ell)$ studied in \Cref{prop:aux_func}, here with parameters: $a=1/5,b=3$.}
\end{figure}

\begin{proposition}[Supremum of Auxiliary Function]\label{prop:aux_func_sup}
Let $g$ be as in \Cref{prop:aux_func}, then 
$$\sup\{g(\ell):1\leqslant \ell \leqslant b \} \leqslant  a\max(a,b)^{b-1}$$
\end{proposition}
\begin{proof}[Proof of \Cref{prop:aux_func_sup}]
Suppose that $a>b$, then by \Cref{prop:aux_func} we find that $g(\ell)$ is maximized at $\ell^{*} = b/W(b\mathrm{e}/a) > b $ (we use $W(u)<1$ iff $u<\mathrm{e}$);
then $g(\ell) \leqslant g(b)=a^b$ for $\ell \in [1,b]$.
If $a\leqslant b$ then $g(\ell) = a\cdot a^{\ell-1}\ell^{b-\ell} \leqslant a b^{\ell-1}\ell^{b-\ell} \leqslant a b^{b-1}$ when $\ell\in [1,b]$.
\end{proof}

\begin{proof}[Proof of \Cref{prop:aux_func_sup}]
Since $W(u) < 1$ when $0<u < \mathrm{e}$, we have $W(b\mathrm{e}/a) < 1$ when $a>b$ so 
$\ell^{*} > b$ and the supremum is at $\ell = b$.
Suppose now $a/b\geqslant 1$, then $W(b\mathrm{e}/a)\geqslant 1$ and 
$$
(a/\ell^{*})^{1/W}= W(b\mathrm{e}/a)^{1/W(b\mathrm{e}/a)}\cdot (a/b)^{W(b\mathrm{e}/a)} = W(u)^{1/W(u)} \cdot (u/\mathrm{e})^{1/W(u)},\quad u = b\mathrm{e}/a.
$$
Since $u\geqslant \mathrm{e}$ and $W(u)\geqslant 1$ we have $W(u)^{1/W(u)} = \Theta(1)$ and
$(u/\mathrm{e})^{1/W(u)}= \Theta(1)$ because for $u\gg 1$ we have $W(u) = \Theta(\log u)$. Therefore $(a/\ell^{*})^{1/W(u)}= \Theta(1)$ and
$$
g(\ell^{*}) = \Theta(1)^b (\ell^{*})^b = \Theta( b/W(u) )^b  \leqslant O(b)^b
$$
The result now follows, as the maximum can be either at $\ell=1$ or $\ell=\ell^{*}$ or $\ell=b$.
\end{proof}

\subsubsection{Moments of Bin Contributions}

Having estimated $U_1$ and $U_2$ in \Cref{sec:bin_contrib} we are in position to give formulas that control the moments of $\tilde{Q}_x$.
The exact bound is stated below
\begin{corollary}[Moments of Bin Contributions]\label{cor:bin_contrib_moments}
For every $x$ and $p=\Pr[X=x]$ we have
$$\mathbf{E} |\tilde{Q}_x-\mathbf{E}\tilde{Q}_x|^d  \leqslant O(d)^{d}(np)^2\max(d,np)^{d-2} + O(d)^{d/2}(np)^{d+1} \max(d,np)^{d/2-1}$$
\end{corollary}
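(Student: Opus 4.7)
The plan is to combine the centering decomposition of \Cref{sec:bin_contrib} with Minkowski's inequality, then reduce to symmetrized binomial moments via the decoupling/symmetrization results, and finally control the resulting sum using the auxiliary function analysis.

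First, by \Cref{sec:bin_contrib} we write
\[
\tilde Q_x - \mathbf E\tilde Q_x = U_2 + 2(n-1)p\cdot U_1,
\]
so Minkowski's inequality (i.e.\ the triangle inequality for the $L^d$-norm) gives
\[
\bigl(\mathbf E|\tilde Q_x-\mathbf E\tilde Q_x|^d\bigr)^{1/d} \;\leqslant\; \bigl(\mathbf E|U_2|^d\bigr)^{1/d} + 2(n-1)p\cdot\bigl(\mathbf E|U_1|^d\bigr)^{1/d}.
\]
Raising to the $d$-th power and paying a harmless $2^d$ factor (absorbed in the $O(\cdot)$ symbols) reduces the task to estimating $\mathbf E|U_1|^d$ and $\mathbf E|U_2|^d$ individually.

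Second, I would invoke \Cref{cor:contrib_binom_bounds} to bound both moments by the symmetrized binomial moment $M\triangleq\mathbf E(S-S')^d$: explicitly $\mathbf E|U_1|^d\leqslant M$ and $\mathbf E|U_2|^d\leqslant 16 M^2$. The problem is now entirely about estimating $M$ in closed form. Here I would apply \Cref{lemma:symm_binom_moment} with $\sigma^2=2p(1-p)\leqslant 2p$, yielding
\[
M \;\leqslant\; O(d)^{d/2}\sum_{\ell=1}^{d/2}\binom{n}{\ell}\ell^{d/2}\sigma^{2\ell}
\;\leqslant\; O(d)^{d/2}\sum_{\ell=1}^{d/2}(2enp)^{\ell}\,\ell^{\,d/2-\ell},
\]
where I used $\binom{n}{\ell}\leqslant(en/\ell)^{\ell}$ and simplified. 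This is precisely a sum of the auxiliary function $g(\ell)=a^{\ell}\ell^{b-\ell}$ with $a=2enp$ and $b=d/2$ studied in \Cref{prop:aux_func,prop:aux_func_sup}.

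Third, the key quantitative step: \Cref{prop:aux_func_sup} gives $\sup_{1\leqslant\ell\leqslant d/2} g(\ell)\leqslant (2enp)\max(2enp,d/2)^{d/2-1}$, and multiplying by the number of terms $d/2$ (which is absorbable into the factor $O(d)^{d/2}$) yields
\[
M \;\leqslant\; O(d)^{d/2}\cdot np \cdot \max(np,d)^{d/2-1}.
\]
Plugging this into the two-term bound from step one gives exactly
\[
\mathbf E|\tilde Q_x-\mathbf E\tilde Q_x|^d \;\leqslant\; O(1)^d\, M^2 + O(1)^d(np)^d\, M,
\]
whose two summands match the two terms in the statement after substitution. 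The main obstacle is the third step: ensuring the supremum/sum analysis of $g(\ell)$ really yields the clean factor $\max(d,np)^{d/2-1}$ uniformly in the relative sizes of $np$ and $d$, i.e.\ that both the ``small $np$'' and ``large $np$'' regimes merge into a single expression. This is exactly what \Cref{prop:aux_func_sup} is designed for, so no extra case analysis is needed in the final assembly.
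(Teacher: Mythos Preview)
Your proposal is correct and follows essentially the same route as the paper: decompose via \Cref{sec:bin_contrib}, reduce $U_1,U_2$ to the symmetrized binomial moment $M$ via \Cref{cor:contrib_binom_bounds}, expand $M$ using \Cref{lemma:symm_binom_moment} together with $\binom{n}{\ell}\leqslant(\mathrm{e}n/\ell)^\ell$ and $\sigma^2\leqslant 2p$, and then control the resulting $\sum_\ell g(\ell)$ with \Cref{prop:aux_func_sup}. The only cosmetic difference is that the paper first applies Jensen to turn $(\sum_\ell g(\ell))^2$ into $\tfrac{d}{2}\sum_\ell g(\ell)^2$ before invoking the sup bound, whereas you bound $\sum_\ell g(\ell)\leqslant \tfrac{d}{2}\sup_\ell g(\ell)$ directly and then square; both yield the same final estimate up to constants absorbed in $O(d)^d$.
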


\begin{remark}[Sub-Gamma Behavior]
When $np =\Omega(d)$ from \Cref{cor:bin_contrib_moments} we get that the behavior of $\tilde{Q}_x$ is \emph{sub-gamma} with variance factor $v^2= O((np)^2 + (np)^3)$ and scale $b = O(np)$.
\end{remark}

\begin{proof}[Proof of \Cref{cor:bin_contrib_moments}]
By \Cref{sec:bin_contrib} and the inequality $|a+b|^d \leqslant 2^{d-1}|a|^d + 2^{d-1}|b|^d$  
$$
\mathbf{E} |\tilde{Q}_x - \mathbf{E}\tilde{Q}_x|^d \leqslant 2^{d-1}\mathbf{E}U_2^d + 2^{d-1}(np)^d\mathbf{E}U_1^d
$$
for $p=\Pr[X=x]$ and $U_1$ and $U_2$ as defined there.
By \Cref{cor:contrib_binom_bounds} and \Cref{lemma:symm_binom_moment}
$$
\mathbf{E}|\tilde{Q}_x - \mathbf{E}\tilde{Q}_x|^{d} \leqslant 
\left( O(d)^{d/2}\sum_{\ell=1}^{d/2} \binom{n}{\ell} \ell^{d/2} \sigma^{2\ell}\right)^2 + O(d)^{d/2}(np)^{\ell}\sum_{\ell=1}^{d/2} \binom{n}{\ell} \ell^{d/2} \sigma^{2\ell}
$$
Overestimating $\left(\sum_{\ell=1}^{d/2} \binom{n}{\ell} \ell^{d/2} \sigma^{2\ell}\right)^2 \leqslant d/2\cdot \sum_{\ell=1}^{d/2} (\binom{n}{\ell} \ell^{d/2} \sigma^{2\ell})^2$
(Jensen's inequality) and $\binom{n}{\ell} \leqslant (n\mathrm{e}/\ell)^{\ell}$ (the well-known binomial bound)
and $\sigma^2 \leqslant 2p$ we obtain
$$
\mathbf{E}|\tilde{Q}_x - \mathbf{E}\tilde{Q}_x|^{d} \leqslant 
O(d)^{d/2}\sum_{\ell=1}^{d/2} g(\ell)^2 + O(d)^{d/2}(np)^{d}\sum_{\ell=1}^{d/2} g(\ell),\quad g(\ell)\triangleq (np)^{\ell}\ell^{d/2-\ell}.
$$
The result follows now from \Cref{prop:aux_func_sup}.
\end{proof}

\subsection{Assembling Bin Concentrations}

Armed with \Cref{cor:bin_contrib_moments} we are in position to estimate concentration of the sum of bin contribitions, and therefore the tails of the estimator. To this end we distinguish between
tails heavier and lighter than gamma. For fixed even $d$ we define light and heavy bins as
$$
\mathcal{X}^{-} \triangleq \{x:  n \Pr[X=x]  \geqslant d\},\quad \mathcal{X}^{+} \triangleq \{x:  n \Pr[X=x]  < d\}
$$
Now \Cref{thm:main} follows directly from the following facts
\begin{lemma}[Concentration of Contributions with Light Tails]\label{lemma:subgamma_conc}
We have
$$
\Pr[|\sum_{x\in\mathcal{X}^{-}} (\tilde{Q}_x-\mathbf{E}\tilde{Q}_x)|>\epsilon] \leqslant 2\exp(-\Omega(\epsilon^2/(v^2+b\epsilon)))
=2\exp(-\Omega(\min(\epsilon^2/v^2,\epsilon/b))).
$$
where $v^2 \triangleq  n^2\sum_x\Pr[X=x]^2 +  n^3\sum_x\Pr[X=x]^3,\quad b \triangleq n \max_{x}\Pr[X=x]$.
\end{lemma}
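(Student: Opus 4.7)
The plan is to reduce to the sub-gamma aggregation rule via negative dependence, then feed in the per-bin moment bounds of \Cref{cor:bin_contrib_moments} in the regime where they are sharp. I would first observe that centering preserves negative dependence, so by \Cref{claim:neg_dependent} the variables $W_x := \tilde{Q}_x - \mathbf{E}\tilde{Q}_x$ for $x \in \mathcal{X}^-$ are negatively dependent. Since $w\mapsto e^{tw}$ is monotone, the defining inequality for negative association yields the MGF factorization
\[
\mathbf{E}\exp\!\Bigl(t\sum_{x\in\mathcal{X}^-} W_x\Bigr) \leqslant \prod_{x\in\mathcal{X}^-}\mathbf{E}\exp(tW_x),
\]
so aggregation proceeds exactly as in the independent case.

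Next, I would show that for each $x \in \mathcal{X}^-$ the centered bin contribution $W_x$ is sub-gamma with variance factor $O((np_x)^2 + (np_x)^3)$ and scale $O(np_x)$. The point is that the defining condition $np_x \geqslant d$ forces $\max(d', np_x) = np_x$ in \Cref{cor:bin_contrib_moments} for the moment orders $d'$ that control the tail at the claimed level, reducing that estimate to the sub-gamma shape $(\mathbf{E}|W_x|^{d'})^{1/d'} = O(d'\cdot np_x + (d')^{1/2}(np_x)^{3/2})$; \Cref{prop:subg_moments} then converts this into the desired MGF bound. Combining with the factorized MGF, the sum $Y := \sum_{x\in\mathcal{X}^-} W_x$ is sub-gamma with variance factor
\[
\sum_{x\in\mathcal{X}^-}\!\bigl((np_x)^2 + (np_x)^3\bigr) \leqslant n^2\sum_x p_x^2 + n^3\sum_x p_x^3 = v^2,
\]
and scale $\max_{x\in\mathcal{X}^-} np_x \leqslant b$; the stated tail then follows from the standard sub-gamma tail inequality.

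The main obstacle will be the apparent tension between the all-even-$d'$ hypothesis of \Cref{prop:subg_moments} and the fact that \Cref{cor:bin_contrib_moments} yields sub-gamma-shaped moments only when $d'$ stays comparable to $np_x$. The resolution is that the lemma is effectively a Markov-with-optimal-$d$ statement in disguise: for a target deviation $\epsilon$, the optimal Markov order is $d = \Theta(\min(\epsilon^2/v^2, \epsilon/b))$, and the bin split $\mathcal{X}^- = \{x:\, np_x \geqslant d\}$ is engineered precisely so that every contributing bin has good moments at this one specific $d$. A cleanly rigorous execution therefore fixes such a $d$ upfront, bounds $\mathbf{E}|Y|^d$ either through the factorized MGF with Cramer-Chernoff at $t = \Theta(d/\epsilon)$, or equivalently through \Cref{prop:latala_simple} applied to independent copies of the $W_x$'s (justified by the Shao-type moment comparison for negatively associated variables), and concludes by Markov's inequality.
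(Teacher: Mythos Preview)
Your plan matches the paper's two-line proof: negative dependence gives the MGF factorization, \Cref{cor:bin_contrib_moments} supplies the per-bin sub-gamma parameters $v_x^2 = O((np_x)^2+(np_x)^3)$ and $b_x = O(np_x)$, and aggregation yields the stated tail. You are in fact more careful than the paper on one point: the paper simply asserts that each $\tilde{Q}_x$ is sub-gamma with those parameters, but as you correctly flag, \Cref{cor:bin_contrib_moments} only produces sub-gamma-shaped moments for orders $d'\lesssim np_x$ (for larger $d'$ the bound degrades to $(d')^{2d'}$), so a literal invocation of \Cref{prop:subg_moments} over all even $d'$ is not available; your resolution---that the split $\mathcal{X}^{-}=\{x:np_x\geqslant d\}$ is tied to a single $d$ chosen for the target $\epsilon$, and one runs Markov (or Cram\'er--Chernoff at $t=\Theta(d/\epsilon)$) at that one order---is exactly what makes the argument rigorous, and the paper implicitly relies on the same $d$-dependent reading without spelling it out.
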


\begin{lemma}[Concentration of Contributions with Heavy Tails]\label{lemma:heavy_tails_conc}
We have
$$
\Pr[|\sum_{x\in\mathcal{X}^{+}} (\tilde{Q}_x-\mathbf{E}\tilde{Q}_x)|>\epsilon] \leqslant O(1)\exp(-\Omega(\sqrt{\epsilon})), \quad \epsilon^2 = \Omega(n^2\sum_{x}\Pr[X=x]^2).
$$
\end{lemma}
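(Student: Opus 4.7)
My plan is to bound high moments of $\sum_{x\in\mathcal{X}^{+}}W_x$ (where $W_x=\tilde{Q}_x-\mathbf{E}\tilde{Q}_x$ and $p_x:=\Pr[X=x]$) via the simplified Latala inequality (\Cref{prop:latala_simple}), then convert to tail bounds through Markov's inequality with an optimized moment order. The negative dependence of the contributions (\Cref{claim:neg_dependent}) justifies treating them as independent for this purpose, since moment inequalities for sums of negatively associated summands dominate the independent-copy case.

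The first step is to derive, from \Cref{cor:bin_contrib_moments}, a uniform moment estimate $\mathbf{E}|W_x|^k\leqslant(Cd)^{2k}(np_x)^2$ valid for all even $k\leqslant d$ and all $x\in\mathcal{X}^{+}$. This requires applying the corollary with exponent $k$ and handling separately the sub-cases $np_x<k$ and $k\leqslant np_x<d$, using the defining inequality $np_x<d$ of $\mathcal{X}^{+}$ to absorb surplus $np_x$-factors into $d$-powers via $(np_x)^j\leqslant d^{j-2}(np_x)^2$. Summing over bins yields $\sum_{x\in\mathcal{X}^{+}}\mathbf{E}|W_x|^k\leqslant (Cd)^{2k}\Sigma^2$ with $\Sigma^2\triangleq n^2\sum_x p_x^2$.

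The second step feeds this into \Cref{prop:latala_simple}. Expanding $\phi(u)=\sum_{k\text{ even},\,2\leqslant k\leqslant d}\binom{d}{k}u^k$ and substituting,
\[
\sum_{x\in\mathcal{X}^{+}}\mathbf{E}\phi(W_x/T)\;\leqslant\;\Sigma^2\sum_{k\text{ even}}\binom{d}{k}\bigl((Cd)^2/T\bigr)^k.
\]
Choosing $T=\Theta(d^2)\Sigma^{2/d}$ with a large enough hidden constant so that the dominant $k=d$ contribution is controlled makes the right-hand side at most $d$, hence $(\mathbf{E}|\sum_{x\in\mathcal{X}^{+}}W_x|^d)^{1/d}\leqslant O(d^2)\Sigma^{2/d}$.

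Finally, Markov's inequality yields $\Pr[|\sum W_x|>\epsilon]\leqslant (O(d^2)\Sigma^{2/d}/\epsilon)^d=O(d)^{2d}\Sigma^2/\epsilon^d$. Taking $d=c\sqrt{\epsilon}$ for a sufficiently small constant $c$, the exponent evaluates to $2c\sqrt{\epsilon}\log(Cc)+2\log\Sigma$; the first summand is $-\Omega(\sqrt{\epsilon})$ by choice of $c$, while the second is $o(\sqrt{\epsilon})$ since the hypothesis $\Sigma\leqslant O(\epsilon)$ gives $\log\Sigma=O(\log\epsilon)=o(\sqrt{\epsilon})$ as $\epsilon\to\infty$. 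Combining yields the claimed $O(1)\exp(-\Omega(\sqrt{\epsilon}))$ tail. The main technical obstacle is the uniform moment estimate in step one: \Cref{cor:bin_contrib_moments} is stated only for the single exponent $d$, and one must redo the case analysis with exponent $k$ and exploit $np_x<d$ to absorb the excess $np_x$-factors; without this unification one is left with a $k$-dependent prefactor that forces a suboptimal $\epsilon^{1/3}$ Rosenthal-type rate instead of the target $\sqrt{\epsilon}$.
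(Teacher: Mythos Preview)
Your outline matches the paper's (bin-moment bound $\to$ simplified Lata{\l}a $\to$ Markov with $d\asymp\sqrt{\epsilon}$), but the Lata{\l}a step contains a real gap.

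You assert that $T=\Theta(d^2)\Sigma^{2/d}$ makes $\sum_{x}\mathbf{E}\phi(W_x/T)\leqslant d$, justifying this only by the $k=d$ term. But the small-$k$ terms are not controlled by this choice. Concretely, with your bound $\mathbf{E}|W_x|^k\leqslant (Cd)^{2k}(np_x)^2$ the $k=2$ summand is
\[
\Sigma^{2}\binom{d}{2}\frac{(Cd)^{4}}{T^{2}}\;\asymp\;\frac{\Sigma^{2-4/d}\,d^{2}}{A^{2}},
\]
which exceeds $d$ whenever $\Sigma^{2}$ is larger than order $d^{3}$; in the intended regime $d\asymp\sqrt{\epsilon}$ and $\epsilon\asymp\Sigma$ this means $\Sigma^{2}\gg\epsilon^{3/2}\asymp d^{3}$ as soon as $\Sigma$ is large, so the Lata{\l}a condition fails. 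The factor $\Sigma^{2/d}$ you attach to $T$ only rescales term $k$ by $\Sigma^{-2k/d}$, which is far too weak at $k=2$ to repair this. Consequently the claimed moment bound $\|{\sum W_x}\|_d\leqslant O(d^2)\Sigma^{2/d}$ is not established (indeed, for $d^2\ll\Sigma$ it would contradict the trivial lower bound $\|\sum W_x\|_d\geqslant(\mathbf{Var})^{1/2}=\Omega(\Sigma)$).

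The paper fixes this by taking $T=\Theta\bigl(v/\sqrt{d}+d^{2}\bigr)$ with $v=\Sigma$. The additional $v/\sqrt{d}$ piece is precisely what absorbs the variance-dominated small-$k$ contributions, while $d^{2}$ handles the large-$k$ ones. This yields $\|\sum W_x\|_d\leqslant O(d^{2}+v/\sqrt{d})$, and after Markov one must impose \emph{both} $d^{2}=O(\epsilon)$ and $v/\sqrt{d}=O(\epsilon)$. The second constraint is where the hypothesis $\epsilon=\Omega(v)$ enters: it guarantees an even $d$ with $(v/\epsilon)^{2}\leqslant d\leqslant\Theta(\sqrt{\epsilon})$ exists, and choosing the largest such $d$ gives the tail $2^{-\Omega(\sqrt{\epsilon})}$. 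Your step-one moment estimate and the final Markov optimisation are fine; the missing ingredient is the $v/\sqrt{d}$ term in $T$.
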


\begin{remark}[Concluding \Cref{thm:main}]
To conclude \Cref{thm:main} use
$\Pr[|X_1+X_2|\geqslant \epsilon] \leqslant \Pr[|X_1|\geqslant \epsilon/2] + \Pr[|X_2|\geqslant \epsilon/2]$
to combine tails contributed by $\mathcal{X}^{+}$ and $\mathcal{X}^{-}$. This gives
$$
Pr[|\sum_{x\in\mathcal{X}} (\tilde{Q}_x-\mathbf{E}\tilde{Q}_x)|>\epsilon] \leqslant O(1)\exp(-\Omega(\min(\epsilon^2,\epsilon/b,\sqrt{\epsilon})), \quad \epsilon = \Omega(v)
$$
The condition $\epsilon = \Omega(v)$ can be ignored because of the term $\epsilon^2/v^2$, this regime gives the trivial bound of $O(1)$.
The result follows by $Q = \frac{1}{n(n-1)}\sum_{x}\tilde{Q}_x$
 and changing $\epsilon:=n^2\epsilon$.
\end{remark}

\begin{proof}[Proof of \Cref{lemma:subgamma_conc}]
The result follows directly from known aggregation of sub-gamma distributions $\tilde{Q}_x$ which have paramters
$v^2=\sum_x v^2_x$ and $b= \max_x b_x$ as given in~\Cref{cor:bin_contrib_moments}. Aggregation is based on calculating moment generating functions, thus it does apply for negatively dependent random variables.
\end{proof}

\begin{proof}[Proof of \Cref{lemma:heavy_tails_conc}]
Observe that, by~\Cref{cor:bin_contrib_moments} applied to $d$ replaced with $k$, we have
$$\mathbf{E}(\tilde{Q}_x-\mathbf{E}\tilde{Q}_x)^k\leqslant O(k)^{2k-2} n^2\Pr[X=x]^2,\quad k=2,4,\ldots,d,\quad x\in\mathcal{X}^{+}$$
Thus
$$
\mathbf{E} (1+(\tilde{Q}_x-\mathbf{E}\tilde{Q}_x)/T)^{d} \leqslant 1+n^2\Pr[X=x]^2/T^2 \sum_{k=2}^{d} \binom{d}{k} O(k)^{2k-2} / T^{k-2}.
$$
We now apply \Cref{prop:latala_simple}, since $\phi(u) = \sum_{k=2\ldots d,k\text{ even}}\binom{d}{k}u^k$ we need to find $T$ such that
$$
n^2 \sum_{x\in\mathcal{X}^{+}}\Pr[X=x]^2/T^2 \sum_{k=2\ldots d, k \text{ even}}  \binom{d}{k} O(k)^{2k-2} / T^{k-2} \leqslant d
$$
This holds for $T = \Theta( v / \sqrt{d} + d^2)$, where $v^2=n^2\sum_{x\in\mathcal{X}^{+}}\Pr[X=x]^2$
 with appropriate constants. If this is the case then
$$\mathbf{E}|\sum_{x\in\mathcal{X}^{+}}(\tilde{Q}_x-\mathbf{E}\tilde{Q}_x)|^d\leqslant O(d^2 + v / \sqrt{d})^d$$
By Markov's inequality we obtain
$$
\Pr[ |\sum_{x\in\mathcal{X}^{+}}(\tilde{Q}_x-\mathbf{E}\tilde{Q}_x)|>\epsilon]\leqslant O((d^2/\epsilon + v/\sqrt{d}\epsilon)^d
$$
We set $d$ so that $d^2 = O(\epsilon)$ and $v/\sqrt{d} = O(\epsilon)$ with appropriate constants, which gives the tail of $2^{-d}$.
Note that $d$ has to be even and at least $2$, thus we the tail is $2^{-\Omega(\epsilon^{1/2})}$
if $\epsilon = \Omega(1)$ and $\epsilon = \Omega(v)$; the first condition can be ignored because the bound is then $O(1)$.
\end{proof}

\section{Conclusion}\label{sec:conclude}

We have derived strong concentration guarantees for the collision estimator, which subsumes variance bounds from previous works.
Such concentration bounds can be used for example to eliminate the need for boosting of weak estimators (majority/median tricks).

\bibliography{citations}

\end{document}